\newlength{\alphabet}
\newcommand{\flatcaps}[1]{\large{\textsc{\MakeTextLowercase{#1}}}}
\renewcommand\@makefntext[1]{
    \@thefnmark.~#1}
\numberwithin{equation}{section}
\newtheoremstyle{TEMPLATE}
{} 
{} 
{} 
{} 
{} 
{} 
{} 
{} 
\newtheoremstyle{majorStatement}
    {\topsep}
    {\topsep}
    {\itshape}
    {}
    {\bfseries}
    {}
    {\newline} 
    {\sffamily\bfseries\thmname{#1} \thmnumber{#2}
     \nopagebreak \medskip \nopagebreak}
\theoremstyle{majorStatement}
\newtheorem{theorem}{Theorem}[section]
\newtheorem{proposition}[theorem]{Proposition}
\newtheoremstyle{minorStatement}
    {\topsep}
    {\topsep}
    {\normalfont}
    {0pt}
    {\bfseries}
    {.~}
    {0pt} 
    {\sffamily\thmname{#1} \thmnumber{#2}\thmnote{\normalfont (#3)}}
\theoremstyle{minorStatement}
\newtheorem{lemma}[theorem]{Lemma}
\newtheorem{corollary}[theorem]{Corollary}
\newtheorem{remark}[theorem]{Remark}
\newtheoremstyle{titledDefinition}
    {\topsep}
    {\topsep}
    {\normalfont}
    {0pt}
    {\bfseries}
    {}
    {\newline} 
    {\smallskip\sffamily\thmname{#1} \thmnumber{#2}\thmnote{\normalfont\itshape ~--~#3~--}}
\theoremstyle{titledDefinition}
\newtheorem{definition}[theorem]{Definition}
\newtheoremstyle{titledTheorem}
    {2\topsep}
    {2\topsep}
    {\itshape}
    {}
    {\bfseries}
    {}
    {\newline} 
    {\thmnote{\normalfont\flatcaps{#3}}
     \nopagebreak \medskip \nopagebreak}
\theoremstyle{titledTheorem}
\newtheoremstyle{assumptionStyle}
    {\topsep}
    {\topsep}
    {\normalfont}
    {0pt}
    {\bfseries}
    {.~}
    {0pt}
    {\sffamily\thmname{#1} \thmnumber{#2}}
\theoremstyle{assumptionStyle}
\newtheorem{assumption}[theorem]{Assumption}
\renewenvironment{proof}
    {\par\vspace{\topsep}{\sffamily\bfseries Proof.~}}{\nopagebreak\par\hfill\qed}
\newcommand{\coloneqq}{:\mspace{-1.5mu}=}
\newcommand{\eqqcolon}{=\mspace{-1.5mu}:}
\newcommand{\Matern}{Ma{\-}t{\'e}rn\xspace}
\newcommand{\KL}{Kar{\-}hu{\-}nen-Lo{\'e}ve\xspace}
\newcommand{\R}{\mathbb{R}}
\newcommand{\rpow}[1]{ \mathbb{R}^{ #1 } }
\newcommand{\C}{\mathbb{C}}
\newcommand{\N}{\mathbb{N}}
\newcommand{\Z}{\mathbb{Z}}
\newcommand{\sumlim}[2]{\sum\limits_{#1}^{#2}}
\newcommand{\suml}[1]{\sum\limits_{#1}}
\newcommand{\prodlim}[2]{\prod\limits_{ #1 }^{ #2 } \mspace{-1mu}}
\newcommand{\seq}[2]{( #1 )_{ #2 \in \mathbb{N}}}
\newcommand{\conj}[1]{\overline{#1}}
\newcommand{\mc}[1]{\mathcal{ #1 }}
\newcommand{\mf}[1]{\mathfrak{ #1 }}
\newcommand{\mbb}[1]{\mathbb{ #1 }}
\newcommand{\mrm}[1]{\mathrm{ #1 }}
\newcommand{\bidx}[2]{#1^{({#2})}}
\newcommand{\pfwd}[2]{{ #1 }_{\mathlarger{\sharp}}\,{ #2 }}
\newcommand{\iprod}[1]{\mspace{2mu} \langle {#1} \rangle}
\newcommand{\ltwo}[1]{L^{2}( #1 )}
\newcommand{\adjnt}[1]{{#1}^{\dagger}}
\newcommand{\norm}[1]{\| \mspace{2mu} {#1} \mspace{2mu}\|}
\newcommand{\fourier}[1]{\widehat{#1}}
\newcommand{\farg}[2]{\widehat{#1} \mspace{-1mu} \left({#2}\right)}
\newcommand{\prdi}[2]{{#1}_{#2}^{(\pi)} \mspace{-1mu}}
\DeclareMathOperator{\exv}{\mathbb{E} \mspace{-1mu}}
\newcommand{\stdnrm}[1]{\mathcal{N}\big({#1}\big)}
\newcommand{\pcoeff}[1]{\farg{\varphi}{(2 \alpha)^{-1} {#1}}}
\newcommand{\nrf}[2]{\bidx{ {#1}_{#2} }{0}\mspace{-2mu}}
\newcommand{\drf}[2]{\bidx{ {#1}_{#2} }{1}\mspace{-2mu}}
\newcommand{\e}[2]{e^{({#1})}_{#2}}
\title{Dirichlet-Neumann Averaging: The DNA of Efficient Gaussian Process Simulation}
\author{Robert~Kutri\thanks{Institute for Mathematics and Interdisciplinary Center for Scientific
Computing (IWR), Heidelberg University, 69120, Heidelberg, Germany~(\href{mailto:robert.kutri@uni-heidelberg.de}
{robert.kutri@uni-heidelberg.de},~\href{mailto:robert.scheichl@uni-heidelberg.de}{r.scheichl@uni-heidelberg.de})} \and
Robert~Scheichl\footnotemark[1]}
\date{\today\\[1em]
\raggedright {\small
    \textbf{\textsf{MSC Codes}}:~60G15,~60G60,~60G10,~60-08,~65C20 \\[1em]
    \textbf{\textsf{Funding:}} This work is supported by the Deutsche Forschungsgemeinschaft (German Research\\
    Foundation) under Germany's Excellence Strategy EXC 2181/1 - 390900948 (the Heidelberg \\[-0.8em]
    STRUCTURES Excellence Cluster)
    }}
\begin{document}

\maketitle

\vspace{-2em}

\begin{abstract}
\small
    Gaussian processes (GPs) and Gaussian random fields (GRFs) are
essential for modelling spatially varying stochastic phenomena.
Yet, the efficient generation of corresponding realisations
on high-resolution grids remains challenging,
particularly when a large number of realisations are required.
This paper presents two novel contributions.
First, we propose a new methodology based on Dirichlet-Neumann
averaging (DNA) to generate GPs and
GRFs with isotropic covariance on regularly spaced grids. 
The combination of discrete
cosine and sine transforms in the DNA sampling approach allows
for rapid evaluations
without the need for modification or padding of the desired
covariance function. While this introduces an error in the
covariance, our numerical experiments show that this error is
negligible for most relevant applications, representing a
trade-off between efficiency and precision. We provide
explicit error estimates for \Matern covariances.
The second contribution links our new methodology to the stochastic
partial differential equation (SPDE) approach for sampling GRFs.
We demonstrate that the concepts developed in our methodology can also guide
the selection of boundary conditions in the SPDE framework. We
prove that averaging specific GRFs sampled via the SPDE approach yields
genuinely isotropic realisations without domain extension, with the
error bounds established in the first part remaining valid.

\end{abstract}

\vspace*{\fill}

\pagebreak

 The generation of realisations of Gaussian Processes (GPs) or Gaussian Random
Fields (GRFs) with a prescribed covariance is a crucial step in many
algorithms, 
e.g., related to
spatial statistical inference
\cite{bhatt2015effect, griffiths2009influence, minden2017fast, simpson2016going, VILLEJO2023100744},
Bayesian non-linear inverse problems
\cite{cotter2013pcn, nickl2023bayesian},
classification
\cite{williams1998bayesian}
and uncertainty quantification
\cite{cliffe2011multilevel, roininen2014whittle}.
In these contexts, GPs and GRFs can both be viewed as collections of
Gaussian random variables $u_x$, indexed by a continuous variable
$x \in D \subseteq \R^d$ with $d \in \N$. In what follows, we
primarily adopt the perspective associated with GRFs, interpreting
$u$ as a random function over the domain of interest $D$.
Moreover, we consider the case of GRFs with isotropic covariance function.
In practical applications, one is typically interested in generating realisations
of a given GRF on a finite-dimensional grid $\{x_j : j = 1, \ldots, n\} \subset D$,
for $n \in \N$. By definition, any marginal distribution of a GRF, when evaluated
at a finite set of $n$ points, is an $n$-dimensional multivariate Gaussian with
moments determined by the underlying GRF. 
Thus, we can re-frame the GRF sampling problem on a grid, as seeking to generate 
realisations $z \in \R^n$ of a multivariate Gaussian random variable with 
covariance matrix $C \in \R^{n \times n}$, whose structure is inherited from the GRF.
Generating realisations of such $z \in \R^n$, even
for isotropic covariance, is computationally challenging when $n$
is too large for a direct factorisation of the covariance matrix $C$.
Although efficient sampling algorithms exist for certain settings, they often become
impractical when both small error margins and good scalability on parallel architectures
are required.

\KL expansions \cite{schwab2006karhunen, khoromskij2009application} offer a convenient
stochastic parametrisation in terms of the eigenfunctions of the covariance operator.
However, the number of modes required for a given accuracy depends on the decay of the eigenvalues of the
covariance operator. Computation and storage of these modes can become prohibitively costly for large GRFs
or for fields with low regularity. Similar limitations appear in reduced-basis or low-rank sampling
approaches \cite{latz2019fast, feischl2018fast, HARBRECHT2012428}.

For isotropic covariance functions, the matrix $C \in \rpow{n \times n}$ is (block-) Toeplitz
and can thus be embedded in a $M \times M$ matrix with $M > n$,
which is (block-) circulant. Circulant matrices in turn, are diagonalised by
a Fast Fourier Transform (FFT). If the FFT frequencies corresponding to the auxiliary
matrix are non-negative, this leads to an efficient algorithm for generating  GRF
realisations.
This is the
underlying idea in Circulant Embedding (CE) methods \cite{wood1994simulation, dietrich1997fast}.
The primary obstacle to computational efficiency for this
approach stems from the fact that in order to achieve non-negativity of the FFT frequencies,
the size $M$ of the auxiliary matrix may need to be substantially larger than $n$.
This results in significant computational and memory overhead as the embedding
size requires fine-tuning via padding.
Sufficient conditions for non-negativity of the FFT frequencies have
been established in \cite{wood1995truncated, gneiting1998simple}. The authors
in \cite{graham2018analysis} derive explicit lower bounds on the size of $M$ for 
\Matern covariance functions.

By modifying the covariance function outside the domain of interest, it may be possible to alter the
spectral density in a way that achieves non-negativity of the FFT frequencies with a smaller $M$ compared
to CE. This idea has been pursued e.g.~in \cite{gneiting2006fast}
and \cite{moreva2018fast}. The authors in \cite{bachmayr2018representations}  have placed this idea
in a framework based on periodic continuations of the covariance function.
A rigorous analysis and explicit lower bounds on $M$ for classical CE and 
a smooth periodisation of \Matern covariance functions is carried out in \cite{bachmayr2020unified}.
All of these periodic continuations alter the covariance function in such a way as to achieve compact support.

The first main result of this paper in Section~\ref{sec:sampFr} is a novel methodology for generating GRF realisations
based on a periodisation of the covariance function, that does not rely on compact support and is positive-definite
for any choice of $n$. Evaluation of the corresponding GRF is performed by a combination of 
Discrete Sine and Cosine Transforms. This Dirichlet-Neumann Averaging (DNA) approach provides a fast and
well-scaling method for the generation of GRF realisations with arbitrary isotropic covariance function. In 
Proposition~\ref{prop:combRFCov} we show that the covariance of this GRF is indeed a specific periodisation
of the desired covariance with more favourable characteristics compared to a naive periodisation.
Thus, the DNA methodology facilitates fast evaluations of the GRF, but its practicability depends
on the error introduced by the periodisation.
In Proposition~\ref{prop:covErrorMatern}, we provide an explicit error bound for the case of \Matern covariances.
This shows exponential decay of the error with respect to a parameter $\alpha$, which plays the role of $M$
in the CE method. In comparison to CE, this framework represents a trade-off between an additional error
in the covariance and the computational speed-up gained through eliminating the need for padding.
In this regard, numerical experiments in Section~\ref{sec:numExp} suggest that the additional error is
essentially negligible for most practically relevant parameter configurations.

Although this method of evaluation is fast, it requires a regularly spaced grid, and the use of the FFT may
limit its scalability in massively parallel settings.
For these applications, we establish a link to the stochastic partial differential equation (SPDE) approach for
sampling GRFs with \Matern covariance introduced in \cite{lindgren2011explicit}. In that approach,
approximations to GRFs with \Matern covariances are computed by solving a fractional elliptic SPDE
on the domain $D$. The primary advantage of the SPDE approach 
is that it leverages
the flexible, fast, and scalable Finite-Element (FE) framework for solving PDEs, enabling efficient and parallel
generation of GRFs.

However, aliasing effects introduced by the artificial choice of boundary conditions
for the SPDE pose a significant obstacle \cite{lindgren2011explicit, roininen2014whittle}. A common remedy to this is to
first compute the solution on a larger domain $\widetilde{D} \supset D$ and then truncate back to the domain of
interest $D$
\cite{khristenko2019analysis, croci2018efficient, drzisga2017scheduling}.
This is similar in spirit to padding in CE.
Addressing this error can
introduce significant computational overhead, which may offset the benefits of the SPDE framework. The authors in
\cite{khristenko2019analysis} analyse the maximal covariance error in the case of homogeneous Dirichlet,
homogeneous Neumann, as well as Robin boundary conditions on the entire boundary.
They derive upper bounds on the covariance error for a given extension of the desired domain, in terms 
of values of the \Matern covariance function itself. They demonstrate that none of the aforementioned boundary
conditions, when considered in isolation, result in isotropic fields.

Thus, the paper's second main result in Proposition~\ref{prop:SPDECorrespondence} establishes a
correspondence between GRFs in the SPDE approach and the periodisation established in
Section~\ref{sec:frame:spdeConnection}. This correspondence can be leveraged to eliminate the need for
oversampling and to generate truly isotropic random fields. To achieve this, instead of  oversampling the
SPDE solution on a larger domain, in our DNA variant of the SPDE approach, several SPDE solutions
with Dirichlet and Neumann 
boundary conditions are averaged on the original, 
smaller domain.  As a consequence, in massively parallel environments, where memory
constraints are often the bottleneck and where it is essential to reuse existing data structures and
communication patterns, the proposed new  procedure has clearly favourable characteristics compared to oversampling.
Moreover, the explicit error bound of Proposition~\ref{prop:covErrorMatern}
extends also to the SPDE version of the DNA approach. 
Numerical experiments in Section~3 indicate that this periodisation error is negligible
compared to the discretisation error from the FE method across a large range of mesh resolutions
of the underlying mesh.

The paper is structured as follows.
In Section~1, we introduce notation, the
problem setting, and the periodisation framework. We finish with
a brief introduction to the SPDE approach. In Section~\ref{sec:sampFr}, we construct the new DNA sampling
framework and state the main results of the paper. Section~\ref{sec:higherDim} concerns the general 
construction of the new sampling approach and the error estimates for \Matern covariances, while its
connection to the SPDE approach is described in Section~\ref{sec:frame:spdeConnection}.
Section~\ref{sec:numExp} presents numerical experiments that illustrate the developed theory.

\section{Preliminaries}

    We use the following convention for the $d$-dimensional Fourier transform $\fourier{\varphi}: \R^d \to \C$ of a
function $\varphi \in L^1(\R^d)$:
$$\fourier{\varphi}(y) \coloneqq \int_{\R^d} \varphi(x)~e^{-2 \pi i y \cdot x} \, \mrm{d}x.$$
where $x \cdot y$ denotes the Euclidean inner product for any two vectors $x, y \in \R^d$.
If $\varphi$ is symmetric about the origin and invariant with respect
to rotations, we can represent it in terms of a function $\rho: \R \to \R$ via
$\varphi(x) = \rho(\norm{x}_2)$ for any $x \in \R^d$.
In this case, the Fourier transform $\fourier{\varphi}$ can be expressed in terms of 
the Hankel transform $\mc{H}_d \rho: \R \to \R$ of $\rho$ through
$\fourier{\varphi}(y) = \big(\mc{H}_d \rho\big)(\norm{y}_2)$, where
$$\big(\mc{H}_d \rho\big)(r) \coloneqq 2 \pi r^{-\nicefrac{(d-2)}{2}}
        \int_{\R}\rho(t)~t^{\nicefrac{d}{2}} \, J_{\nicefrac{(d-2)}{2}}(2 \pi r t)~\mrm{d}t, \quad \text{for} \ \ r \geq 0,$$
and $J_{\gamma}$ denotes the Bessel function of order $\gamma \in \R$. Regarding computational
aspects of Fourier analysis, the Fast Fourier Transform (FFT) allows for the efficient evaluation of
the Discrete Fourier transform.
Similarly, one can use the (type 1) Discrete Cosine Transform (DCT) and the 
(type 1) Discrete Sine Transform (DST) to efficiently evaluate the sums
\begin{equation}
    \sumlim{k=0}{n-1} \eta_k \cos\Big(\frac{\pi m k}{n-1}\Big),
        \quad \text{and} \quad
        \sumlim{k=1}{n-1} \eta_k \sin\Big(\frac{\pi m k}{n-1}\Big),
        \quad \text{for } \eta_k \in \R, \quad m = 0, \ldots, n-1.
\end{equation}
The basis functions used in the FFT, DCT and DST satisfy periodic,
homogeneous Neumann and homoegeneous Dirichlet boundary conditions respectively.

Corresponding higher-dimensional transforms are constructed through the tensor product of
one-dimensional transforms. Multiindices $\mu$, which may be understood as vectors
in the $d$-dimensional lattice $\Z^d$, form an essential component of the notation used in this context.
For multiindices $\mu = (\mu_1, \ldots, \mu_d)^T \in \Z^d$, we will mainly use the norm
$\norm{\mu}_{\infty} \! \coloneqq \! \max\{|\mu_j| : j = 1, \ldots, d\}$ and the 
Euclidean norm $\norm{\mu}_2$. For fixed $n \in \N$,
we also define the truncated lattices
$\Z_n^d \coloneqq \{\mu \in \Z^d : \norm{\mu}_{\infty} \leq n \}$,
and
$\N_n^d \coloneqq \{\mu \in \N^d : \norm{\mu}_{\infty} \leq n\}$.
Similar to decomposing $\Z^2$ into disjoint quadrants, we can decompose
$\Z^d$ into $2^d$ hyperoctants $H_{q}$, where for a given $q \in \{-1, 1\}^d$, we define
\begin{equation}    \label{eq:hyperoctant}
    H_{q} \coloneqq \{ \mu \in \Z^d : \mu_j \geq 0 \text{ when } q_j = 1 \text{ and }
        \mu_j < 0, \text{ when } q_j = -1 \}.
\end{equation}
The truncated hyperoctants $H_{q, n}$ for $n \in \N$ are defined analogously to
$H_q$ in \eqref{eq:hyperoctant}, where we replace $\Z^d$ by $\Z_n^d$. It follows from
their construction that
\begin{equation}    \label{eq:hyperoctantDecomposition}
    \Z^d = \bigcup\limits_{q \in \{-1, 1\}^d} H_{q} \quad \text{and} \quad
        \Z_n^d = \bigcup\limits_{q \in \{-1, 1\}^d} H_{q, n}, 
\end{equation}
where the unions are disjoint.

We denote the space of continuous functions on $U$, where $U \subset \R^d$
is bounded and connected, by $C(U)$. Further, we denote the space of
equivalence classes of square-integrable functions that agree almost everywhere w.r.t.~the 
Lebesgue measure on $\R^d$ by $L^2(U)$. In the usual abuse of notation, we treat elements $f \in L^2(U)$ 
as functions in the classical sense. The space $L^2(U)$ is a separable Hilbert space and
we denote the associated inner product by $\iprod{\cdot, \cdot}$ and the induced
norm by $\norm{\cdot}_{L^2(U)}$. The Sobolev space $H^1(D)$ of functions $f \in L^2(U)$
whose weak partial derivatives $\partial_j f$, $j = 1, \ldots, d$ exist and lie in $L^2(U)$
is denoted by $H^1(U)$ (see e.g.~\cite{evans2010_pde}, Chapter~5 for details).

    \subsection{Problem Setting}    \label{sec:prel:probSetting}
        We begin by providing a more precise definition of the concepts of isotropic GRFs and the
associated covariance functions, which were outlined in the introductory section.
For the given domain $D \subset \R^d$, a GRF may be understood as a 
random variable taking values in some function space on $D$, say 
$\ltwo{D}$. That is, a GRF is a measurable map
$\mc{U}$ from a latent probability space $(\Omega, \mf{A}, \mbb{P})$ into $(\ltwo{D}, \mf{B})$, 
where $\mf{B}$ denotes a suitable $\sigma$-algebra on $\ltwo{D}$.
Regarding the technical details of this viewpoint, we refer to the books by Gelfand and 
Vilenkin \cite{gelfand1964_generalizedFunctions4} or Adler
\cite{adler2010_geometryOfRandomFields} and will from here on assume that such a 
construction is possible in our setting. As we are interested in point-wise evaluations of GRF
realisations, we restrict ourselves to GRFs $\mc{U}$, for which the push-forward measure
$\pfwd{\mbb{P}}{\mspace{-1mu}\mc{U}}$ is supported on $C(D)$
(c.f.~Theorem 2.2.1 in \cite{adler2010_geometryOfRandomFields}). In this case, evaluating
a GRF at any $x \in D$ is well-defined and we can represent a GRF by the function
\begin{equation}
    u: D \times \Omega \to \R, \quad u(x, \omega) \coloneqq \big[\mc{U}(\omega)\big](x),
        \quad \omega \in \Omega, x \in D.
\end{equation}
We will not explicitly state $\omega$ when it is convenient to do so
and causes no confusion. 

Analogously to finite-dimensional, multivariate Gaussian distributions, a GRF is fully characterised by 
its mean function $m: D \to \R$ and its covariance function $\mrm{cov}: D \times D \to \R$, which
are defined through
\begin{equation}
    m(x) \coloneqq \exv u(x, \cdot), \quad \text{and} \quad
        \mrm{cov}(x, y) \coloneqq \exv u(x, \cdot) u(y, \cdot).
\end{equation}
As mentioned previously, we restrict ourselves to centred GRFs, i.e., GRFs where
$m \equiv 0$.
Furthermore, we focus on \emph{isotropic} covariance functions, which encode translational, as well as
rotational invariance of the covariance function.
\begin{definition}[Isotropic Covariance Function]   \label{def:isoCov}
    We call the covariance function of a GRF \emph{isotropic}, if there
    exists an $\ell > 0$ and $\rho: \R \to \R$, such that
    \begin{equation}    \label{eq:isoCovDef}
        \mrm{cov}(x, y) = \rho\big(\ell^{-1} \norm{x - y}_2 \big), \quad \text{for all } x, y \in D.
    \end{equation}
    We refer to $\ell$ as the \emph{correlation length} associated to the covariance function.
\end{definition}
For an isotropic covariance \eqref{eq:isoCovDef}
the Fourier transform of the function
\begin{equation}    \label{eq:statCovDef}
    \varphi: \rpow{d} \to \R, \quad \text{with}\quad \varphi(\delta) \coloneqq \rho(\ell^{-1} \norm{\delta}_2),
\end{equation}
can be expressed via the Hankel transform $\mc{H}_{d}$ of $\rho$
such that
$\fourier{\varphi}(\zeta) = \ell^{\,d} \big[\mc{H}_d\rho\big]\big(\ell~\norm{\zeta}_2\big)$, $\zeta \in \rpow{d}$.
In order to distinguish the representations of the covariance function via $\varphi$ and $\rho$, we will
refer to $\varphi$ as the \emph{stationary} covariance function and to $\rho$ as the \emph{isotropic}
covariance function.
A widely used family of covariance functions is given by the \Matern family with smoothness
parameter $\nu > 0$, which are defined by
\begin{equation}    \label{eq:maternDefinition}
    \rho(s) = \frac{2^{1 - \nu}}{\Gamma(\nu)}
        \big(\sqrt{2 \nu} s \big)^{\nu} K_{\nu}(\sqrt{2 \nu} s),
        \quad s \in [0, \infty).
\end{equation}
Here, $\Gamma$ denotes the Euler Gamma function and $K_{\nu}$ is the Bessel function of second kind and order
$\nu$. For the \Matern covariance kernel, the corresponding Hankel transform is 
explicitly given by (see e.g.~\cite{graham2018analysis}, Equation (2.22)):
\begin{equation}    \label{eq:maternHankel}
    \big(\mc{H}_d\rho \big)(s) = C_{\nu} (2 \nu)^{\nu} \big(2 \nu + (2 \pi s)^2\big)^{-(\nu + \nicefrac{d}{2})},
        \quad \text{where} \quad
        C_{\nu} \coloneqq (4 \pi)^{\nicefrac{d}{2}} 
            \frac{\Gamma(\nu + \nicefrac{d}{2})}{\Gamma(\nu)}.
\end{equation}

The transition from the infinite-dimensional formulation that has been presented thus far, to a finite-dimensional
representation which is typically required in applications, is achieved by considering evaluations of GRFs on
a given grid $T_N \coloneqq \{x_j \in D: j = 1, \ldots, N\}$, $N \in \N$. To be more precise, if we consider the
random variable $z: \Omega \to \R^N$, whose components are given by
$(z(\omega))_j \coloneqq u(x_j, \omega)$, for $j = 1, \ldots, N$,
then $z$ is a multivariate Gaussian random variable by construction. Given that we assume $u$ is centred, it follows that $\exv z = 0$. In this
case, the structure of the covariance matrix $C \in \R^{N \times N}$ of $z$ is inherited solely from the covariance
function of $u$. For isotropic covariance functions \eqref{eq:isoCovDef} this explicitly reads
\begin{equation}    \label{eq:componentwiseIsotropicCovarianceMatrix}
    (C)_{k\ell}  = \exv (z)_k (z)_\ell = \exv u(x_k, \cdot) \, u(x_\ell, \cdot) = \rho(\ell^{-1} \norm{x_k - x_\ell}_2),
        \quad k, \ell = 1, \ldots, N.
\end{equation}
Thus, in the finite-dimensional setting, we consider the problem of efficiently generating realisations
$z \in \R^N$ of a centred multivariate Gaussian random variable, with covariance matrix given by $C$ as in
\eqref{eq:componentwiseIsotropicCovarianceMatrix}.
One common approach for generating such realisations is to find a factorisation
$C = H H^T$, for some $H \in \R^{N \times N}$. The random vector defined through $z \coloneqq H \xi$, where
$\xi \sim \mc{N}(0, \mbb{I}_{N \times N})$ is a multivariate standard normal random variable,
then satisfies 
\begin{equation}    \label{eq:covFactorisation}
    \exv z = 0 \quad \text{and} \quad \exv z z^T = H \big(\exv \xi \xi^T\big) H^T = C.
\end{equation}
We assume that for our purposes, $N$ is so large that a simple Cholesky factorisation of $C$ is not feasible.
Accordingly, we seek factorisations that scale significantly better than cubic as $N$ tends to infinity
-- ideally linearly with $N$.
To this end, leveraging the interpretation of $z$ as the evaluation of an underlying GRF with 
isotropic covariance $\rho$ is crucial in designing a scalable algorithm. 

In addition to the GRF $u$ being centred and with isotropic covariance function $\rho$, we require assumptions
that ensure the series and expressions occurring later in the text to be well-defined. To this end, we use the
following assumptions, which we frame in terms of the stationary covariance function $\varphi$ as defined in
\eqref{eq:statCovDef}.
\begin{assumption}  \label{assump:statCov}
    We assume that any occurring stationary covariance function $\varphi \in L^1(\R^d) \cap L^2(\R^d)$ satisfies
        $|\fourier{\varphi}(y)| \leq A (1 + \norm{y}_2)^{-(d + \varepsilon)}$
            for $A, \varepsilon > 0$ and all $y \in \R^d$.
\end{assumption}
This somewhat technical regularity assumption is satisfied by most covariance functions used in
practice. For instance, recalling \eqref{eq:statCovDef} and \eqref{eq:maternHankel}, the \Matern covariance
function \eqref{eq:maternDefinition} satisfies Assumption~\ref{assump:statCov}; for details see
Appendix~\ref{app:lem:maternAssumption}.

    \subsection{Periodisation-Based Sampling}   \label{sec:prel:prd}
        The efficiency of CE and related methods is derived from the application
of the Fast Fourier Transform (FFT). 
In using the FFT, one implicitly evaluates a periodic continuation of the covariance function.
By appropriately modifying the covariance function, it is possible to alter specific aspects of
the sampling method. However, essentially all methods that 
build upon this idea focus on achieving compact support of the covariance
function, see e.g.~\cite{gneiting2006fast, moreva2018fast, bachmayr2018representations}.
Here, we take a slightly different approach and use
a natural periodic continuation, which is implied by the Poisson Summation formula (PSF)
(Theorem~VII.2.4 and Corollary in \cite{stein1971introduction}).
This approach has the advantage of providing a periodic continuation that is always positive definite.
However, it also results in a modification of the covariance function inside the domain of interest $D$.
The accuracy of the method will thus hinge on how closely this modified covariance function matches the
desired covariance function in $D$.

The PSF states that if $\varphi: \R^d \to \R$ satisfies Assumption~\ref{assump:statCov}, then
\begin{equation}    \label{eq:poisSum}
    \suml{\eta \in \Z^d} \varphi(x + \eta) = \suml{\mu \in \Z^d} \fourier{\varphi}(\mu)~e^{-2 \pi i \mu \cdot x}
\end{equation}
and both series in \eqref{eq:poisSum} converge absolutely.
Informally, the series on the left of \eqref{eq:poisSum} represents
an infinite sum of copies of the pristine covariance $\varphi$, where
each copy is translated by a vector $\eta$ of the $d$-dimensional
lattice $\Z^d$. The PSF then provides an
explicit Fourier series representation of the resulting periodic function.
We capture this notion of periodisation in the following definition,
which generalises \eqref{eq:poisSum} slightly, by introducing a scaling parameter
$\alpha \geq 1$.

\begin{definition}[Periodisation]   \label{def:periodisation}
    Let $\varphi: \R^d \to \R$ satisfy Assumption~\ref{assump:statCov} and let
    $\alpha \geq 1$ be a given scaling parameter.
    We define the \emph{periodisation} $\prdi{\varphi}{\alpha}$ of $\varphi$
    through the Fourier series
    \begin{equation}    \label{eq:periodisationDefinition}
        \prdi{\varphi}{\alpha}(\delta) \coloneqq
            \alpha^{-d} \suml{\mu \in \Z^d}
            \farg{\varphi}{\alpha^{-1} \mu}
            e^{-2 \pi i \alpha^{-1} \mu \cdot \delta}. 
    \end{equation}
    We refer to $\prdi{\varphi}{\alpha}$ as \emph{positive definite},
    if $\farg{\varphi}{\alpha^{-1} \mu} > 0$, for all $\mu \in \Z^d$.
\end{definition}
It follows from \eqref{eq:poisSum}, that
\begin{equation}    \label{eq:PSFstationaryCovariance}
    \prdi{\varphi}{\alpha}(\delta)
        = \suml{\eta \in \Z^d} \varphi(\delta + \alpha \eta),
        \quad \delta \in \R^d.
\end{equation}
Stationary covariance functions are positive definite
functions on $\R^d$ by definition. Thus, Bochner's theorem \cite{bochner1932Fourier}
ensures that any periodisation $\prdi{\varphi}{\alpha}$ of a stationary covariance function
is also positive definite in the sense of Definition~\ref{def:periodisation}.
In order to provide motivation for the construction of a GRF with periodised covariance, it is
instructive to first consider the finite-dimensional case. It should be recalled that the transition
to this setting is achieved by the introduction of a grid $T_N \subset \R^d$. If this grid is uniformly
spaced, then the periodicity of the function $\prdi{\varphi}{\alpha}$ implies that the associated
covariance matrix $\prdi{C}{\alpha}$ as per \eqref{eq:componentwiseIsotropicCovarianceMatrix}, is 
Toeplitz and can be embedded in a circulant matrix (see \cite{lord2014introduction} Example 6.47 and
Definition 6.48).
Given that circulant matrices are diagonalised by a FFT and that periodisations of covariance matrices are
positive definite, this implies a factorisation of the covariance matrix of the form
$\prdi{C}{\alpha} = W \Lambda \adjnt{W}$, where $\Lambda$ is a diagonal matrix with positive entries on the
diagonal, and $W \in \R^{N \times N}$ denotes the $d$-dimensional Discrete Fourier Transform (DFT) matrix. 
By Definition~\ref{def:periodisation}, the entry $\lambda_\mu$ of $\Lambda$ associated to the grid location
$\mu \in \Z_n^d$ is given by $\lambda_\mu = \fourier{\varphi}((2 \alpha)^{-1} \mu)$.

Similarly to \eqref{eq:covFactorisation}, this factorisation can be used to sample realisations $z$ of a
centred multivariate Gaussian with covariance matrix $\prdi{C}{\alpha}$.
To this end, denote by $\xi$ a centred, complex-valued Gaussian random vector and define
\begin{equation}    \label{eq:samplingCirculantCovariance}
    z \coloneqq W \Lambda^{\nicefrac{1}{2}} \xi, \quad \text{where} \quad
        \exv \xi \xi^T = 0 \quad \text{and} \quad \exv \xi \adjnt{\xi} = 2 \, \mathbb{I}_{N \times N}.
\end{equation}
It can be shown (see e.g.~\cite[Lemma~6.51]{lord2014introduction}) that the real part $\mf{Re}(z)$
and the imaginary part $\mf{Im}(z)$ of $z$ are uncorrelated and each has covariance matrix $\prdi{C}{\alpha}$.
Note that if the matrix $\Lambda \in \R^{N \times N}$ is known, then computing $z$ in
\eqref{eq:samplingCirculantCovariance} essentially consists in drawing a realisation of $\xi$ and computing
the matrix-vector product with $W$, which can be achieved by a FFT with $\mc{O}(N \log N)$ computational
cost. We extend this sampling idea to the infinite-dimensional setting with the definition of \emph{periodised}
GRFs.
\begin{definition}[Periodised Random Field] \label{def:periodisedRF}
    To each positive definite periodisation $\prdi{\varphi}{\alpha}$ of a stationary covariance 
    function $\varphi$, we associate a \emph{periodised random field} $\prdi{u}{\alpha}$ through
    \begin{equation}    \label{eq:periRF}
        \prdi{u}{\alpha}(x, \omega) \coloneqq \alpha^{-\nicefrac{d}{2}}
            \suml{\mu \in \Z^d} \xi_\mu(\omega)
            \sqrt{\farg{\varphi}{\alpha^{-1} \mu}}~
            e^{-2 \pi i \alpha^{-1} \mu \cdot x}, \quad x \in \rpow{d},
    \end{equation}
    where the $\xi_\mu$, $\mu \in \Z^d$ are centred complex-valued Gaussian random
    vectors, with 
    \begin{equation}    \label{eq:canRFCoeff}
    \xi_{-\mu} = \conj{\xi_\mu}, 
        \quad \text{and} \quad \exv \xi_{\mu}\conj{\xi_\eta} = \delta_{\mu \eta},
        \quad \text{for } \mu, \eta \in \Z^d.
    \end{equation}
\end{definition}
Given Assumption~\ref{assump:statCov}, absolute convergence
of the series for any $x \in \R^d$ is assured by the PSF.
An elementary calculation then shows that $\exv \prdi{u}{\alpha}(x,\,\cdot\,) = 0$
for all $x \in \R^d$. In order to see that the covariance function of the periodised
random field is indeed a periodisation of the pristine covariance, recall from
\eqref{eq:hyperoctant} and \eqref{eq:hyperoctantDecomposition} that $\Z^d$ may be represented
as the disjoint union of the $H_q$, where $q \in \{-1, 1\}^d$.
Using this fact, along with the specific choice of stochastic coefficients in \eqref{eq:canRFCoeff} 
and rotational symmetry of isotropic covariances, we see that
\begin{align}
    \exv \prdi{u}{\alpha}(x, \,\cdot\,)~\prdi{u}{\alpha}(y, \,\cdot\,)
        &= \alpha^{-d} \!\suml{\mu, \eta \in \Z^d} \big(\!\exv \xi_\mu \xi_\eta\big)
            \sqrt{\fourier{\varphi}(\alpha^{-1} \mu)}\sqrt{\fourier{\varphi}(\alpha^{-1} \eta)}~
            e^{-2 \pi i \alpha^{-1}(\mu x + \eta y)} \\
        &= \alpha^{-d} \!\suml{\substack{q \in \{-1, 1\}^d\!, \\ q_1 = 1}} \suml{\mu \in H_{q}} 
            \big(\!\exv \xi_\mu \xi_{-\mu} \big)~
            \fourier{\varphi}(\alpha^{-1} \mu)~e^{-2 \pi i \alpha^{-1} \mu \cdot (x - y)} \\
        &= \alpha^{-d} \!\suml{q \in \{-1, 1\}^d} \suml{\mu \in H_{q}} \fourier{\varphi}(\alpha^{-1} \mu)~
            e^{-2 \pi i \alpha^{-1} \mu \cdot (x - y)} \\ 
        &= \prdi{\varphi}{\alpha}(x - y).
\end{align}
For practical purposes, we consider truncations of the series in \eqref{eq:periRF}.
Although fairly general truncations are possible in this framework, for ease of exposition we
restrict ourselves to truncations of the form  $\norm{\mu}_{\infty} \leq n$, for a given $n \in \N$.
We denote the corresponding truncated periodised random field by $\prdi{u}{\alpha, n}$ and 
its covariance function by $\prdi{\varphi}{\alpha, n}$. We note that after truncation,
identity \eqref{eq:PSFstationaryCovariance} no longer holds true. In other words, for the
truncated lattice $\Z_n^d$, in general it holds that $\prdi{\varphi}{\alpha, n}(\delta) \neq
\sum_{\eta \in \Z_n^d} \varphi(\delta + \alpha \eta)$ for $\delta \in \R^n$. This only becomes an equality in
the limit as $n$ tends to infinity.
In light of \eqref{eq:samplingCirculantCovariance}, any truncated periodised random field
$\prdi{u}{\alpha, n}$ can be evaluated on a uniformly spaced grid $T_N$ using a $d$-dimensional FFT,
which provides an efficient way to generate realisations on a computer.

At this point, it is worth noting how CE fits into this framework. To this end,
we replace $\rho$ in \eqref{eq:statCovDef} with $\chi \rho$, where $\chi$ is the indicator function for
$[-\alpha, \alpha]$ and consider a periodisation $\prdi{\varphi}{\alpha, \mrm{CE}}$
of $\varphi_{\mrm{CE}}(\delta) \coloneqq (\rho \chi)
(\ell^{-1} \norm{\delta}_2)$. Assuming this periodisation to be positive definite for the moment, 
gives
\begin{equation}    \label{eq:circulantEmbeddingPeriodisation}
    \prdi{\varphi}{\alpha, \mrm{CE}}(\delta) =
        \suml{\eta \in \Z^d} \big(\rho \chi\big) (\ell^{-1} \norm{\delta + \alpha \eta}_2)
        = \rho(\ell^{-1} \norm{\delta}_2), \quad \text{if} \quad \delta \in [-\alpha, \alpha]^d.
\end{equation}
That is, we recover not just a periodisation, but the desired covariance $\rho$ itself.
The CE method is equivalent to performing the usual transition to
the finite-dimensional setting via truncation of the series and evaluation on a uniformly spaced grid
for $\prdi{\varphi}{\alpha, \mrm{CE}}$.
Sampling Gaussian vectors $z$ with the corresponding circulant covariance matrix is then performed 
exactly as in \eqref{eq:samplingCirculantCovariance}, by using a FFT. As a consequence of 
\eqref{eq:circulantEmbeddingPeriodisation}, this Gaussian vector $z$ will have the desired
covariance matrix $C$, as long as positive definiteness of $\varphi_{\mrm{CE}}$ can be ensured.
However, although the periodisation of $\varphi$ is positive definite according
to Bochner's theorem, the periodisation of $\varphi_{\mrm{CE}}$ need not be positive definite
(barring convexity requirements on $\rho$, see \cite{wood1995truncated}). The authors in \cite{graham2018analysis}
have shown that in the \Matern case it is possible to recover positive
definiteness of $\prdi{\varphi}{\alpha, \mrm{CE}}$, provided $\alpha$ is chosen sufficiently large.
This result demonstrates that the required lower bound on $\alpha$ increases with both $\nu$ and $\ell$.

Evaluating a periodised random field, associated to either $\prdi{\varphi}{\alpha}$ or
$\prdi{\varphi}{\alpha, \mrm{CE}}$ using the FFT, always yields a uniform grid of
evaluations in $[-\alpha, \alpha]^d$. As we are only interested in random field
evaluations in $D = [0, 1]^d$, this means that any computed evaluation outside $D$ 
will be discarded. Thus, large values of $\alpha$ are inherently linked to
computational overhead. As $\alpha$ is chosen with respect to a given domain of interest,
this is independent of the specific choice of $D$.
Thus, for large smoothness $\nu$ and/or large correlation
length $\ell$, the choice of $\alpha$, necessary to 
recover positive definiteness of $\prdi{\varphi}{\alpha, \mrm{CE}}$ for \Matern covariance
may prove prohibitively large in practice\cite{graham2018analysis}. In contrast, periodisations as per
Definition~\ref{def:periodisation},
are always positive definite and so evaluations of the periodised random field associated to $\prdi{\varphi}{\alpha}$
are always possible -- independent of $\alpha$. In this sense, the choice of $\alpha$ provides a trade-off between the
accuracy of the sampled vector's covariance on the one hand, and the computational cost on the other.

    \subsection{SPDE-Based Sampling}    \label{sec:prel:spdeSampling}
        An alternative sampling approach, which has recently gained a lot of traction,
relies on the observation in \cite{whittle1954stationary, whittle1963stochastic}, that a GRF $u(\cdot, \omega)$
on $\R^d$ with \Matern covariance function satisfies
\begin{equation}    \label{eq:SPDEProblem}
    \big(\kappa^2 - \Delta\big)^{\beta} u(\cdot, \omega) = \mc{W}(\omega), \quad \text{in } \R^d,
        \quad \text{a.s.}
\end{equation}
Here, $\Delta$ denotes the Laplace operator, $\mc{W}$ represents $L^2$ white noise
in the sense of an isonormal Gaussian process (see \cite{nualart2018introduction}, Definition~1.1.1),
$\kappa \coloneqq \sqrt{2\nu} \ell^{\,-1}$ and $\beta = \nicefrac{\nu}{2} + \nicefrac{d}{4}$. The authors in
\cite{lindgren2011explicit,simpson2016going} propose
an approximation to the above SPDE problem on bounded domains $D \subset \R^d$ through the Finite-Element
method, where the problem on $D$ is augmented with homogeneous Neumann boundary conditions.
For $\beta = 1$, the FE method provides a finite-dimensional approximation $u_h(\cdot, \omega)$ to the
solution of the SPDE \eqref{eq:SPDEProblem} for almost every $\omega \in \Omega$, given a suitable
finite-dimensional representation $\xi^{(\mc{W})} \in \ltwo{D}$ of the white noise process $\mc{W}$.
This approximation is defined on a mesh of the target domain $D$ with maximum cell diameter $h > 0$.
Its key advantage in computing $u_h$ is the sparse representation of the factor matrix $H$ in
\eqref{eq:covFactorisation}, enabling the use of fast, scalable sparse solvers for generating realisations.
However, this comes at the cost of an approximation error, which can be reduced at an algebraic rate
by decreasing $h$ \cite{lindgren2011explicit,simpson2016going, cox2020regularity}.

Computing the FE approximation relies on a weak formulation of problem \eqref{eq:SPDEProblem}
that incorporates boundary conditions on $\partial D$. This reduces the computation of $u_h$ to projecting
onto a finite-dimensional subspace spanned by selected test functions and solving a linear system within
that subspace. Consequently, understanding the action of $\mc{W}$ and $L^{\beta}$ on these test functions
is necessary.
For any finite set of test function $\{v_i\}_{i = 1}^n \subset \ltwo{D}$ the actions $\mc{W}(v_i)$ are
jointly Gaussian, with expectation and covariance given by
\begin{equation}    \label{eq:whiteNoiseDef}
    \exv \mc{W}(v_i) = 0,   \quad \text{and} \quad \exv \mc{W}(v_i) \mc{W}(v_j) = \iprod{v_i, v_j},
        \quad \text{for } i, j = 1, \ldots, n.
\end{equation}
Defining $L \coloneqq \kappa^2 - \Delta$, the action of the fractional-power operator $L^{\beta}$ is
defined in terms of its spectral properties. More precisely, given an $\ltwo{D}$-orthonormal set of
eigenfunctions $\{e_j\}_{j \in \N}$ of $L$, associated to the eigenvalues $\{ \lambda_j\}_{j \in \N}$,
the action of $L^{\beta}$ on the domain
$\mc{D}(L^\beta) \coloneqq \big\{v \in \ltwo{D}
: \sum_{j \in \N} \lambda_j^{2 \beta} \iprod{v, e_j}^2 < \infty \big\} \subset \ltwo{D}$,
is given by
\begin{equation}    \label{eq:fracPowerAction}
    L^{\beta} v \coloneqq \suml{j \in \N} \lambda_j^{\beta} \iprod{v, e_j} e_j,
    \quad v \in \mc{D}(L^{\beta}).
\end{equation}
A more detailed and rigorous discussion of the weak formulation, the well-posedness and the regularity of
problem \eqref{eq:SPDEProblem} is available in \cite{cox2020regularity}. For incorporating arbitrary
exponents $\beta > 0$ within a FE framework, see \cite{bolin2020rational, bolin2020numerical}.
Efficient computation of approximations to $\mc{W}$ with respect to a FE basis is addressed in \cite{croci2018efficient}.

\section{Dirichlet-Neumann Averaging}
    \label{sec:sampFr}

    When evaluating a truncated GRF using a FFT, we implicitly impose 
periodic boundary conditions
on the random field. Our first aim in this section is to show that in a similar
vein, choosing
a specific combination of boundary conditions can
be used to provide an efficient sampling algorithm for isotropic GRFs.
To simplify the presentation and avoid the added technical and notational overhead
in general $d$ dimensions, we start by illustrating the idea in one dimension.
We will then generalise it to $d \geq 1$ dimensions in 
Definition~\ref{def:combRF}. In Proposition~\ref{prop:combRFCov} we prove that this
general construction indeed transfers the initial idea to arbitrary dimensions.
Lastly, we provide an error estimate for the special case of \Matern covariance
in Proposition~\ref{prop:covErrorMatern}.

Consider the open domain $D = (0, 1) \subset \R$ on which we want to generate GRF realisations and
allow for an extension $D_{\alpha} = (0, \alpha)$ to control the periodisation,
analogously to Section~\ref{sec:prel:prd}. For an expansion satisfying homogeneous
Neumann boundary conditions on $\partial D_\alpha$, a canonical basis is given by
appropriately scaled cosine functions. The analogous statement holds for homogeneous Dirichlet
boundary conditions and sine functions.  Following the same periodisation idea as in Definition~\ref{def:periodisedRF}, but under the constraint of these boundary conditions,
naturally leads to two GRFs, $\nrf{u}{\alpha}(x)$ and $\drf{u}{\alpha}$ defined
through
\begin{equation}    \label{eq:nrf1d}
    \nrf{u}{\alpha}(x) \coloneqq \xi_0
        \sqrt{2 \alpha^{-1} \farg{\varphi}{0}}
        + \sqrt{2 \alpha^{-1}} \suml{m \in \N}
            \xi_m \sqrt{\pcoeff{m}}~\cos(\sfrac{\pi m x}{\alpha}),
            \quad x \in \R,
\end{equation}
and\vspace{-2ex}
\begin{equation}    \label{eq:drf1d}
    \drf{u}{\alpha}(x) \coloneqq \sqrt{2 \alpha^{-1}} \suml{m \in \N}
        \xi_m \sqrt{\pcoeff{m}}~\sin(\sfrac{\pi m x}{\alpha}),
            \quad x \in \R,
\end{equation}
where\vspace{-2ex}
\begin{equation}   
    \xi_m \sim \stdnrm{0, 1} \quad \text{and} \quad \exv \xi_m \xi_k = \delta_{mk},
         \quad m, k \in \N_0.
 \end{equation}
These random fields correspond to a choice of homogeneous Neumann- and Dirichlet
boundary conditions on $\partial D_\alpha$ respectively. By construction,
$\exv \nrf{u}{\alpha} \equiv 0$
and $\exv \drf{u}{\alpha} \equiv 0$. Furthermore, using the definition of the $\xi_m$,
rotational symmetry of $\varphi$, and Euler's formula, we see that for any $x, y \in \R$,
\begin{align} \nonumber
    \exv \nrf{u}{\alpha}(x) \nrf{u}{\alpha}(y)
        &= \frac{2 \farg{\varphi}{0}}{\alpha} + \frac{2}{\alpha} \suml{m \in \N}
            \pcoeff{m} \cos(\sfrac{\pi m x}{\alpha}) \cos(\sfrac{\pi m y}{\alpha}) \\ \nonumber
        &= \frac{2 \farg{\varphi}{0}}{\alpha} + \frac{2}{\alpha} \suml{m \in \N}
            \pcoeff{m} \Big[
            \cos(\sfrac{\pi m}{\alpha} (x - y))
                + \cos(\sfrac{\pi m}{\alpha} (x + y))\Big] \\ \nonumber
        &= \frac{1}{\alpha} \suml{m \in \Z} \! \pcoeff{m} e^{i \pi \alpha^{-1} m (x - y)} 
            + \frac{1}{\alpha} \suml{m \in \Z} \! \pcoeff{m} e^{i \pi \alpha^{-1} m (x + y)}  \\
        &= \prdi{\varphi}{2 \alpha}(x - y)
            + \prdi{\varphi}{2 \alpha}(x + y).     \label{eq:neuCov}
\end{align}
An analogous calculation for $\drf{u}{\alpha}$ yields
\begin{equation}    \label{eq:dirCov}
    \exv \drf{u}{\alpha}(x) \drf{u}{\alpha}(y)
        = \prdi{\varphi}{2 \alpha}(x - y) - \prdi{\varphi}{2 \alpha}(x + y).
\end{equation}
In both cases, the covariance function of the GRFs can be expressed in terms of a periodisation of the pristine
covariance, with two crucial alterations compared to periodic boundary conditions. Firstly, the scaling parameter
$\alpha$ of the periodisation is twice as large in comparison to the corresponding periodisation with periodic
boundary conditions on the same domain $D_{\alpha}$. Secondly, the presence of terms containing
$\prdi{\varphi}{2 \alpha}(x + y)$ results in the loss of
isotropy for both covariance functions. However, this anisotropy is confined to the boundaries and diminishes
towards the centre of the domain roughly as
$\prdi{\varphi}{2 \alpha}(2 r) \geq \rho(2 \ell^{-1} |r|)$, where $r$ denotes the distance from the boundary.
This is consistent with the findings in \cite{khristenko2019analysis} Theorem~3.2, where the authors proved an
analogous upper bound for the covariance error in general dimension $d$.

    A key observation from (\ref{eq:neuCov}-\ref{eq:dirCov}) is that the terms that break 
the isotropy at the boundary have exactly opposite signs. Thus, the Dirichlet-Neumann
averaged GRF
\begin{equation}    \label{eq:combRF1d}
    u_{\alpha}(x) \coloneqq \frac{1}{\sqrt{2}} \Big(\nrf{u}{\alpha}(x) + \drf{u}{\alpha}(x)\Big),
        \quad x \in \R,
\end{equation}
where $\nrf{u}{\alpha}$ and $\drf{u}{\alpha}$ are independent,  satisfies
\begin{equation}
    \exv u_{\alpha}(x) = 0 \quad \text{and} \quad
        \exv u_{\alpha}(x) u_{\alpha}(y) = \prdi{\varphi}{2 \alpha}(x - y),
            \quad x, y \in \R,
\end{equation}
and hence is again isotropic.
In conclusion, the covariance function of the averaged random field \eqref{eq:combRF1d}
is a periodisation of the pristine covariance function with
twice the scaling parameter of the random field with 
periodic boundary conditions in \eqref{eq:periRF}.

Analogous to the evaluation 
of \eqref{eq:periRF} using a FFT, the GRFs $\nrf{u}{\alpha}$ and $\drf{u}{\alpha}$ can 
be evaluated on a uniformly spaced grid using a DCT and a DST respectively. These
transforms directly yield evaluations on the desired domain $D_\alpha$, in contrast
to the FFT, which yields evaluations in $(-\alpha, \alpha)$ and where
consequently half of the resulting vector will be discarded. Moreover, as the
DCT and DST only take real-valued coefficients, the number of floating-point values
required is half that needed in the FFT.
Hence, the cost of evaluating the periodised random field \eqref{eq:periRF} and that of the
DNA random field in \eqref{eq:combRF1d} are comparable, even when accounting for the
fact that the real and imaginary parts of the FFT provide two independent realisations.
At the same time, the fact that the scaling parameter is doubled, implies that the error is 
significantly reduced. The extent to which it is reduced in the case of \Matern
covariance is shown in Proposition~\ref{prop:covErrorMatern} below, as well as 
numerically verified in Section~\ref{sec:numExp} for other covariance functions.

The cost comparison with CE is similar. The smallest possible embedding for CE,
referred to as 'minimal embedding' in \cite{dietrich1997fast},
corresponds to periodic boundary conditions with $\alpha = 1$. In 
practice however, significantly larger values of $\alpha$
may be required. Although there is no error through
periodisation, the considerably higher computational effort and required fine-tuning
of $\alpha$ in CE (i.e.~determining the padding) might still make it worthwhile to accept the trade-off between error and
speed.

\subsection{Generalisation to Higher Dimensions and Error Estimates}     \label{sec:higherDim}

As we are ultimately interested in evaluations on $D_{\alpha} = (0, \alpha)^d$,
the natural way to generalise the idea presented in the previous section is
to consider a tensor product of the one-dimensional construction. More specifically,
consider a partition of the boundary $\partial D_{\alpha} = \bigcup_{j = 1}^{\,d}
\Gamma_{j, \alpha}$, where
    \begin{equation}
        \Gamma_{j, \alpha} \coloneqq \big\{ x = (x_1, \ldots, x_d)
            \in \partial D_{\alpha}: x_j \in \{0, \alpha\}\big\}, \quad j = 1, \ldots, d.
    \end{equation}
We define the averaged random field in $d$ dimensions as the sum of all $2^d$ 
combinations of random fields with homogeneous Neumann and homogeneous Dirichlet 
boundary conditions on the $\Gamma_{j, \alpha}$, $j = 1, \ldots, d$ respectively.
Making this construction explicit requires some notational machinery, which we want 
to develop in the following.

We indicate the use of homogeneous Neumann boundary conditions on a 
$\Gamma_{j, \alpha}$ by the index $b_j = 0$ and correspondingly use $b_j = 1$ for
homogeneous Dirichlet boundary conditions. A particular choice of boundary
conditions on $D_{\alpha}$ can then be encoded in a vector $b \in \{0, 1\}^d$.
Since the expansion for Dirichlet boundary conditions does not contain the 
constant term, we further introduce the index sets
\begin{equation}    \label{eq:truncatedIndexSets}
    \mc{I}_{b, n} \coloneqq \mc{I}_{b_1, n} \times \ldots \times \mc{I}_{b_d, n},
        \ \ \text{where} \ \ \mc{I}_{0, n} \coloneqq \{ 0, 1, \ldots, n \}
        \ \ \text{and} \ \ 
        \mc{I}_{1, n} \coloneqq \mc{I}_{0, n} \setminus \{0\}.
\end{equation}
With this notation at hand, we define the basis functions
$e_{\mu, \alpha}^b: \R^d \to \R$, as
\begin{equation}    \label{eq:perRFModes}
    e_{\mu, \alpha}^b (x) \coloneqq
        \left(\frac{2}{\alpha}\right)^{\nicefrac{d}{2}} \prodlim{j = 1}{d}
        e_{\mu_j}^{(b_j)} (\nicefrac{x_j}{\alpha}),
        \quad \mu \in \mc{I}_{b, n},
\end{equation}
where\vspace{-2ex}
\begin{equation}
        e_{\mu_j}^{(0)}(\theta) \coloneqq \cos (\pi \mu_j \theta), 
        \quad \text{and} \quad e_{\mu_j}^{(1)} (\theta) \coloneqq \sin (\pi \mu_j \theta),    \quad \theta \in \R.
\end{equation}
With this, the basic building blocks are in place and we are able to 
define the averaged random field in d dimensions.
\begin{definition}[Dirichlet-Neumann Averaged Gaussian Random Field]   \label{def:combRF}
    Let the independent GRFs $u_{\alpha, n}^{b}: D_{\alpha} \times \Omega \to \R$ be defined through
    \begin{equation}    \label{eq:combRFBndry}
        u_{\alpha, n}^{b} (x, \omega) \coloneqq \suml{\mu \in \mc{I}_{b, n}} \xi_\mu (\omega)
            \sqrt{\farg{\varphi}{(2\alpha)^{-1} \mu}}~e_{\mu, \alpha}^{b}(x),
                \quad b \in \{0, 1\}^d,
    \end{equation}
    with
    $\xi_{\mu} \sim \stdnrm{0, 1}$, such that $\exv \xi_{\mu} \xi_{\eta} = \delta_{\mu \eta}$,
    for all $\mu, \eta \in \N_0^d$.
    Then, we refer to
    \begin{equation}    \label{eq:combinedRandomField}
        u_{\alpha, n} (x, \omega) \coloneqq 
            2^{-\nicefrac{d}{2}} \suml{b \in \{0, 1\}^d}
            u_{\alpha, n}^{b}(x, \omega),
    \end{equation}
    as the Dirichlet-Neumann Averaged Gaussian Random Field (DNA GRF).
\end{definition}

We are now in a position to verify that the DNA GRF is indeed
isotropic and the scale parameter $\alpha$ of the associated periodisation
is two times larger than for a naive periodisation.
\begin{proposition}   \label{prop:combRFCov}
    The DNA GRF $u_{\alpha, n}$ in $d$ dimensions satisfies
    \begin{equation}
        \exv u_{\alpha, n}(x, \,\cdot\,) = 0, \quad \text{and} \quad
            \exv u_{\alpha, n}(x, \,\cdot\,) u_{\alpha, n}(y, \,\cdot\,)
            = \prdi{\varphi}{2 \alpha, n}\big(x - y),
            \quad \text{for all } x, y \in \rpow{d}.
    \end{equation}
\end{proposition}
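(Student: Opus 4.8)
The plan is to compute both moments directly from Definition~\ref{def:combRF}, reducing the covariance to the orthonormality relation $\exv\xi_\mu\xi_\eta=\delta_{\mu\eta}$ together with two product-to-sum identities. Vanishing of the mean is immediate: each $\xi_\mu$ is centred, so $\exv u_{\alpha,n}^{b}(x,\,\cdot\,)=\suml{\mu\in\mc{I}_{b,n}}\sqrt{\farg{\varphi}{(2\alpha)^{-1}\mu}}\,e_{\mu,\alpha}^{b}(x)\,\exv\xi_\mu=0$ for every $b$, whence $\exv u_{\alpha,n}(x,\,\cdot\,)=0$. For the covariance I would first expand the average $u_{\alpha,n}=2^{-\nicefrac{d}{2}}\suml{b\in\{0,1\}^d}u_{\alpha,n}^{b}$ and use the stipulated independence of the fields $u_{\alpha,n}^{b}$, so that the cross terms with $b\neq b'$ vanish in expectation and $\exv u_{\alpha,n}(x,\,\cdot\,)u_{\alpha,n}(y,\,\cdot\,)=2^{-d}\suml{b\in\{0,1\}^d}\exv u_{\alpha,n}^{b}(x,\,\cdot\,)u_{\alpha,n}^{b}(y,\,\cdot\,)$. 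Within a fixed $b$, applying $\exv\xi_\mu\xi_\eta=\delta_{\mu\eta}$ collapses the double sum to $\suml{\mu\in\mc{I}_{b,n}}\farg{\varphi}{(2\alpha)^{-1}\mu}\,e_{\mu,\alpha}^{b}(x)\,e_{\mu,\alpha}^{b}(y)$, and the tensor-product form \eqref{eq:perRFModes} rewrites $e_{\mu,\alpha}^{b}(x)\,e_{\mu,\alpha}^{b}(y)$ as $(2/\alpha)^{d}\prod_{j=1}^{d}e_{\mu_j}^{(b_j)}(x_j/\alpha)\,e_{\mu_j}^{(b_j)}(y_j/\alpha)$, leaving the overall prefactor $\alpha^{-d}$.

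The heart of the argument is the cancellation of the anisotropic contributions across the $2^{d}$ boundary-condition vectors, exactly as in (\ref{eq:neuCov}--\ref{eq:dirCov}). Using $\cos a\cos b=\tfrac12[\cos(a-b)+\cos(a+b)]$ and $\sin a\sin b=\tfrac12[\cos(a-b)-\cos(a+b)]$, each coordinate factor takes the uniform form $\tfrac12\big[\cos(\pi\mu_j(x_j-y_j)/\alpha)+(-1)^{b_j}\cos(\pi\mu_j(x_j+y_j)/\alpha)\big]$, the sign $(-1)^{b_j}$ distinguishing Neumann ($b_j=0$) from Dirichlet ($b_j=1$). I would then reorganise the combined sum over $b$ and $\mu\in\mc{I}_{b,n}$ by letting $\mu$ range over $\{0,\ldots,n\}^d$ and summing over the \emph{admissible} $b$, namely those with $b_j=0$ whenever $\mu_j=0$, which encodes that the Dirichlet expansions omit the zero index (cf.~\eqref{eq:truncatedIndexSets}). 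Since both the admissibility constraint and the signs factorise over $j$, the inner $b$-sum factorises coordinatewise: for $\mu_j\neq 0$ the two choices $b_j\in\{0,1\}$ make the $\cos(\pi\mu_j(x_j+y_j)/\alpha)$ terms cancel and leave $\cos(\pi\mu_j(x_j-y_j)/\alpha)$, while for $\mu_j=0$ the single admissible choice $b_j=0$ yields $\tfrac12[1+1]=1$, which again equals $\cos(\pi\mu_j(x_j-y_j)/\alpha)$ there. Hence every anisotropy-breaking $x+y$ term disappears and the covariance reduces to
\[
    \alpha^{-d}\suml{\mu\in\{0,\ldots,n\}^d}\farg{\varphi}{(2\alpha)^{-1}\mu}\;\prod_{j=1}^{d}\cos\!\big(\pi\mu_j(x_j-y_j)/\alpha\big).
\]

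It remains to identify this with the truncated periodisation $\prdi{\varphi}{2\alpha,n}(x-y)$, obtained by restricting \eqref{eq:periodisationDefinition} with scale $2\alpha$ to $\mu\in\Z_n^d$. For this I would invoke the rotational --- hence coordinatewise-even --- symmetry of $\fourier{\varphi}$ twice. First, expanding each exponential $e^{-2\pi i(2\alpha)^{-1}\mu\cdot(x-y)}=\prod_{j}e^{-\pi i\mu_j(x_j-y_j)/\alpha}$ into cosines and sines, any summand carrying a sine in some coordinate is odd in the corresponding $\mu_j$ and cancels against its sign-flipped partner over the symmetric lattice $\Z_n^d$, so only the pure product of cosines survives. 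Second, the same evenness lets me fold the symmetric sum over $\Z_n^d$ onto $\{0,\ldots,n\}^d$, turning the prefactor $(2\alpha)^{-d}$ into $\alpha^{-d}$ and matching the display above.

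The step I expect to be the main obstacle is precisely this last bookkeeping at the degenerate modes. Folding $\Z_n^d$ onto $\{0,\ldots,n\}^d$ attaches to each multiindex a coordinatewise multiplicity ($1$ when $\mu_j=0$ and $2$ when $\mu_j\neq 0$), and these factors must be reconciled against the admissible-$b$ count of the second paragraph and the $(2\alpha)^{-d}$ versus $\alpha^{-d}$ prefactors. This is the familiar endpoint normalisation of the type-1 cosine transform: the modes with one or more vanishing components --- and, after discretisation on the grid, the Nyquist components --- require the $1/\sqrt2$ endpoint scaling of the cosine basis for the identity to close exactly. I would therefore carry the coordinatewise weight explicitly through the computation and check that the half-weight attached to the zero modes of $\prdi{\varphi}{2\alpha,n}$ compensates the prefactor mismatch term by term. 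The cancellation of the anisotropy is the conceptual crux, whereas this endpoint normalisation is the technical point most likely to conceal a factor-of-two slip.
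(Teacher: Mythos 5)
Your proposal follows essentially the same route as the paper's own proof: zero mean by linearity; reduction to a single sum over $\mu$ for each fixed $b$ via independence and $\exv \xi_\mu \xi_\eta = \delta_{\mu\eta}$; the coordinatewise product-to-sum identities carrying the sign $(-1)^{b_j}$; cancellation of all $(x_j+y_j)$-terms upon averaging over $b$ (the paper organises this through the induction identity \eqref{eq:cancellation}, implicitly extending the Dirichlet sums by zero terms, rather than through your admissible-$b$ count, but both devices yield the identical intermediate expression $\alpha^{-d}\sum_{\mu}\farg{\varphi}{(2\alpha)^{-1}\mu}\prod_j \cos\big(\pi\alpha^{-1}\mu_j(x_j-y_j)\big)$ with $\mu$ ranging over $\{0,\ldots,n\}^d$); and finally a folding of the symmetric lattice $\Z_n^d$ onto $\{0,\ldots,n\}^d$ using evenness of $\fourier{\varphi}$.

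The obstacle you flag in your last paragraph is not a phantom: it is exactly the point at which the paper's own proof is loose. The paper passes from $\sum_{\mu}\sum_{q}\cos\big(\pi\alpha^{-1}\sum_j q_j\mu_j(x_j-y_j)\big)$ to a sum over $\Z_n^d$ by invoking the hyperoctant decomposition \eqref{eq:hyperoctantDecomposition}, but the map $(\mu, q) \mapsto (q_1\mu_1,\ldots,q_d\mu_d)$ covers a lattice point having $z$ vanishing coordinates exactly $2^z$ times, whereas the hyperoctants partition $\Z_n^d$; the two sums disagree on the coordinate hyperplanes. Consequently, with the normalisation of \eqref{eq:perRFModes} taken literally, the covariance of $u_{\alpha,n}$ is not $\prdi{\varphi}{2 \alpha, n}(x-y)$ but
\begin{equation}
    \prdi{\varphi}{2 \alpha, n}(x-y) \;+\; \alpha^{-d} \!\!\! \sum_{\substack{\mu \in \{0,\ldots,n\}^d, \\ z(\mu) \geq 1}} \!\!\! \big(1 - 2^{-z(\mu)}\big)\,\farg{\varphi}{(2\alpha)^{-1}\mu}\,\prod_{j=1}^{d}\cos\!\big(\pi\alpha^{-1}\mu_j(x_j-y_j)\big),
\end{equation}
where $z(\mu)$ denotes the number of zero components of $\mu$. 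In $d = 1$ this is the constant offset $\farg{\varphi}{0}/(2\alpha)$, and the same slip is already visible in the one-dimensional motivation: for \eqref{eq:neuCov} to hold as stated, the constant mode in \eqref{eq:nrf1d} must carry the coefficient $\sqrt{\alpha^{-1}\farg{\varphi}{0}}$ rather than $\sqrt{2\alpha^{-1}\farg{\varphi}{0}}$. The repair is precisely the one you name: every zero mode (each $j$ with $\mu_j = 0$, necessarily a Neumann direction) must carry the DCT-I endpoint half-weight, i.e.~$e_{\mu,\alpha}^{b}$ should be scaled by $2^{-z(\mu)/2}$ --- the standard orthonormal DCT-I convention, which is presumably what any implementation uses. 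With that convention your per-coordinate factor at $\mu_j = 0$ becomes $\tfrac{1}{2}$ instead of $1$, the folding closes term by term, and the proposition holds. In short, your proof is the paper's proof executed with the one piece of bookkeeping the paper omits; carrying the endpoint weights explicitly, as you propose, is what makes the argument airtight.
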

\begin{proof}
    Throughout the proof, we will denote
        $\lambda_{\alpha, \mu} \coloneqq \fourier{\varphi}\big((2 \alpha)^{-1} \mu\big)$
        for $\alpha \geq 1,~\mu \in \Z^d$.
    The fact that $\exv u_{\alpha, n}(x, \cdot) = 0$ for all $x \in \R$ follows immediately from
    linearity of expectation and the properties of the stochastic coefficients $\xi_{\mu}$.
    
    For any choice of boundary conditions encoded in $b \in \{0, 1\}^d$,
    again due to the properties of the stochastic coefficients, the covariance
    function of $u^{b}_{\alpha, n}$ is given by
    \begin{equation}    \label{eq:unmodified00}
        \exv u^{b}_{\alpha, n}(x, \cdot) \, u^{b}_{\alpha, n}(y, \cdot)
            =  \left(\frac{2}{\alpha}\right)^{d}\suml{\mu \in \mc{I}_{b, n}^{d}} \lambda_{\alpha, \mu}
                \prodlim{j = 1}{d} \e{b_j}{\mu_j}(\alpha^{-1} x_j)~\e{b_j}{\mu_j}(\alpha^{-1} y_j).
    \end{equation}
    Now, 
    standard trigonometric identities imply that for any $\tau_1, \tau_2 \in \R$
    and $r \in \{0, 1\}$,
    \begin{equation}
        \label{eq:trigIdBasis}
        \e{r}{k}(\tau_1) \e{r}{k}(\tau_2) = \frac{1}{2} \Big(\e{0}{k}(\tau_1 - \tau_2)
            + (-1)^{r} \e{0}{k}(\tau_1 + \tau_2)\Big), \quad k \in \N_0.
    \end{equation}
     Using this identity, 
     the covariance function 
     can for any $ b \in \{0, 1\}^d$
    be rewritten as
    \begin{align}
    \exv u^{b}_{\alpha, n}(x, \cdot) \, u^{b}_{\alpha, n}(y, \cdot)
        = \alpha^{-d} \sum_{\mu \in \mc{I}_{b, n}^{d}} \lambda_{\alpha, \mu}
            \prod_{j = 1}^{d} \Big[&\e{0}{\mu_j}\big(\alpha^{-1}(x_j - y_j)\big) \nonumber \\
            &+ (-1)^{b_j} \e{0}{\mu_j}\big(\alpha^{-1}(x_j + y_j)\big)\Big].
            \label{eq:intermediateCov}
\end{align}
    A straight-forward induction
    on $d$, shows that for any real sequences $\seq{\alpha_j}{j}$, $\seq{\beta_j}{j}$ and $d \in \N$ it
    holds that
    \begin{equation}    \label{eq:cancellation}
        \suml{b \in \{0, 1\}^d} \prodlim{j = 1}{d} \left(\alpha_j + (-1)^{b_j} \beta_j \right)
            = 2^d \prodlim{j = 1}{d} \alpha_j.
    \end{equation}
    As a consequence of \eqref{eq:cancellation} and the independence of the $u_{\alpha,n}^b$, the covariance
    \eqref{eq:unmodified00} of the averaged random field $u_{\alpha, n}$ from \eqref{eq:combinedRandomField}
    reduces to
    \begin{align}
        \exv u_{\alpha, n} (x, \cdot) \, u_{\alpha, n}(y, \cdot)
            &= 2^{-d} \suml{b \in \{0, 1\}^d} \exv~u_{\alpha, n}^b (x, \cdot) \, u_{\alpha, n}^b(y, \cdot) \\
            &= (2 \alpha)^{-d} \suml{\mu \in \N_n^d} \lambda_{\alpha, \mu} 
                \suml{q \in \{-1, 1\}^d} \cos\bigg(\pi \alpha^{-1} \sumlim{j = 1}{d}  q_j \mu_j (x_j - y_j)\bigg).
    \end{align}
    where we have used the definitions of $\mc{I}_{b, n}^{d}$ and $\e{0}{\mu_j}$ and
    the trigonometric identity
    \begin{equation}
        \prodlim{j = 1}{d} \cos (\vartheta_j)
            = 2^{-d} \suml{q \in \{-1, 1\}^d} \cos \big(\sumlim{j = 1}{d} q_j \vartheta_j\big),
            \quad \text{for any } \vartheta \in \R^d.
    \end{equation}
    Recalling \eqref{eq:hyperoctantDecomposition} we can finally see\vspace{-2ex}
    \begin{align}
        \exv u_{\alpha, n} (x) u_{\alpha, n}(y)
            &= (2 \alpha)^{-d} \suml{q \in \{-1, 1\}^d}\suml{\mu \in H_{q, n}} \lambda_{\alpha, \mu} 
                \cos\bigg(\pi \alpha^{-1} \sumlim{j = 1}{d} \mu_j (x_j - y_j)\bigg) \\
            \label{eq:rotSym1}
            &= (2 \alpha)^{-d} \suml{\mu \in \Z_n^d} \lambda_{\alpha, \mu} 
                \cos\big(\pi \alpha^{-1} \mu \cdot (x - y) \big) \\
            \label{eq:covRepresentationTruncated}
            &= (2 \alpha)^{-d} \suml{\mu \in \Z_n^d} \lambda_{\alpha, \mu}~ 
                e^{i \pi \alpha^{-1} \mu \cdot (x - y)} \\
            &= \prdi{\varphi}{2 \alpha, n} (x - y),
    \end{align}
    where we have utilised the rotational symmetry of $\fourier{\varphi}$ from 
    \eqref{eq:rotSym1} to \eqref{eq:covRepresentationTruncated}.
\end{proof}

We have previously argued that the DNA sampling approach provides
an efficient way to generate GRF realisations. However, unlike CE, the method
produces a covariance that is not exactly the desired 
covariance, but a periodisation of it. Therefore, the accuracy and 
applicability of this method depend on the error committed.

We investigate the maximal point-wise error
\begin{equation}    \label{eq:covErrorDef}
    \norm{\prdi{\varphi}{2 \alpha, n} - \varphi}_{\infty} \coloneqq
    \max\limits_{\delta \in [-1, 1]^d} \big| \, \prdi{\varphi}{2 \alpha, n}(\delta) - \varphi(\delta) \, |
\end{equation}
between the covariance function $\prdi{\varphi}{2 \alpha, n}(\delta)$ of the averaged random 
field as per Proposition~\ref{prop:combRFCov} and the desired, isotropic
covariance $\varphi(\delta) = \rho\big(\ell^{-1} \norm{\delta}_2\big)$. The
covariance error \eqref{eq:covErrorDef} naturally splits into a component
related to the truncation of the corresponding random field and a component due to periodisation,
\begin{equation}
    \norm{\varphi - \prdi{\varphi}{2 \alpha, n}}_{\infty} \leq 
        \norm{\varphi - \prdi{\varphi}{2 \alpha}}_{\infty}
        + \norm{\prdi{\varphi}{2 \alpha} - \prdi{\varphi}{2 \alpha, n}}_\infty.
\end{equation}
To derive explicit rates of convergence with respect to $\alpha$ and $n$, we consider in the
following the special class of \Matern covariances with correlation length
$\ell > 0$ and smoothness parameter $\nu > 0$, as introduced in \eqref{eq:maternDefinition}.
We first turn our attention to the error contribution due to periodisation of the desired covariance.
\begin{lemma}   \label{lem:periodError}
    Assume that $\nu \geq \nicefrac{1}{2}$ and that $\alpha \geq 1$ is chosen such that
    $2 \alpha \kappa > \max \{\nicefrac{3}{2}, d + \nu - \nicefrac{3}{2} \}$. Then,
    \begin{equation}    \label{eq:covErrorMatern}
    \norm{\varphi - \prdi{\varphi}{2 \alpha}}_{\infty}
        \leq 
            C_1~e^{-2 \vartheta \alpha \kappa}
            \bigg((\alpha \kappa)^{\nu - \nicefrac{1}{2}}
            + \frac{\Gamma(d + \nu + \nicefrac{1}{2})}{(\alpha \kappa)^{1 + d}}\bigg),
    \end{equation}
where
    $0 < \vartheta \coloneqq
    \Gamma(d + \nu + \nicefrac{1}{2})^{-\frac{1}{d + \nu - \nicefrac{1}{2}}} < 1$,
    and $C_1 \coloneqq d 2^{d + 2 \nu - 1} e^{1 + \kappa}$.
\end{lemma}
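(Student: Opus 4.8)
The plan is to identify the error as the tail of the Poisson summation formula and to control it through the exponential decay of the \Matern kernel. By \eqref{eq:PSFstationaryCovariance}, applied with scaling parameter $2\alpha$, the periodisation is $\prdi{\varphi}{2\alpha}(\delta) = \suml{\eta \in \Z^d} \varphi(\delta + 2\alpha\eta)$, so that
\[
    \prdi{\varphi}{2\alpha}(\delta) - \varphi(\delta) = \suml{\eta \in \Z^d \setminus \{0\}} \varphi(\delta + 2\alpha\eta).
\]
Because the \Matern stationary covariance is non-negative, the absolute value in \eqref{eq:covErrorDef} may be dropped and it suffices to bound $\sup_{\delta \in [-1,1]^d}\suml{\eta \neq 0}\varphi(\delta + 2\alpha\eta)$. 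For $\delta \in [-1,1]^d$ and $\eta \neq 0$ I would use the elementary distance estimate $\norm{\delta + 2\alpha\eta}_2 \geq 2\alpha\norm{\eta}_{\infty} - 1$, which pushes every evaluation point out to a distance that grows with $\alpha$ and which, through the shift by $1$, already accounts for the factor $e^{\kappa}$ appearing in $C_1$.

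Next I would make the decay of $\varphi$ fully explicit. Writing $\varphi(\delta) = f(\kappa\norm{\delta}_2)$ with $f(z) = \frac{2^{1-\nu}}{\Gamma(\nu)}z^{\nu}K_{\nu}(z)$, the hypothesis $\nu \geq \nicefrac{1}{2}$ permits the integral representation $K_{\nu}(z) = \frac{\sqrt{\pi}(z/2)^{\nu}}{\Gamma(\nu + \nicefrac{1}{2})}\int_1^\infty e^{-zt}(t^2 - 1)^{\nu - \nicefrac{1}{2}}\,\mrm{d}t$, from which an upper bound of the form $f(z) \leq c\,z^{\nu - \nicefrac{1}{2}}e^{-z}$ for $z$ above a threshold can be extracted. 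This is where the condition $2\alpha\kappa > \max\{\nicefrac{3}{2}, d + \nu - \nicefrac{3}{2}\}$ should enter: it places us in the regime where $f$ is monotonically decreasing and where the polynomial prefactor is controlled, so that the nearest lattice translate dominates and the subsequent comparison with an integral converges.

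With this decay bound I would estimate the lattice sum by grouping the indices $\eta$ into $\norm{\cdot}_{\infty}$-shells. The innermost shell, whose closest representatives $\eta = \pm e_j$ sit at distance $\approx 2\alpha$, contributes the leading term proportional to $(\alpha\kappa)^{\nu - \nicefrac{1}{2}}e^{-2\alpha\kappa}$; collecting $\frac{2^{1-\nu}}{\Gamma(\nu)}$, the factor from $(2\alpha\kappa)^{\nu - \nicefrac{1}{2}}$, and the shift $e^{\kappa}$ assembles the powers of two and the constant in $C_1$. The remaining shells I would dominate by the radial integral $\int_0^\infty r^{\,d + \nu - \nicefrac{3}{2}}e^{-2\alpha\kappa r}\,\mrm{d}r$ arising from the comparison $\suml{\eta}f(\kappa\norm{\cdot}_2) \lesssim \int_{\R^d} f(\kappa\norm{x}_2)\,\mrm{d}x$, which produces the Gamma factor $\Gamma(d + \nu + \nicefrac{1}{2})$ together with the power $(\alpha\kappa)^{-(1+d)}$ of the second summand.

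Finally, to present both contributions under the single exponential $e^{-2\vartheta\alpha\kappa}$, I would absorb the leftover polynomial growth using an elementary inequality of the type $s^{p}e^{-s} \leq (\text{const})\,e^{-\vartheta s}$, optimising the free parameter so that the constant collapses to a closed form; this is precisely the mechanism that fixes $\vartheta = \Gamma(d + \nu + \nicefrac{1}{2})^{-\frac{1}{d + \nu - \nicefrac{1}{2}}}$. I expect the main obstacle to be the explicit bookkeeping: producing a non-asymptotic bound on $K_{\nu}$ with clean constants rather than a mere $\bigO{\cdot}$ rate, tracking the exact powers of $\alpha\kappa$ and of $2$ through the shell decomposition and the integral comparison, and making the exponential rate uniform across the leading term and the tail via the $\vartheta$-device, all while verifying that the threshold on $2\alpha\kappa$ genuinely guarantees the monotone regime in which the integral bound is valid.
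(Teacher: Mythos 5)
Your first half coincides with the paper's own argument: the PSF tail identity $\prdi{\varphi}{2\alpha}(\delta)-\varphi(\delta)=\sum_{\eta\in\Z^d\setminus\{0\}}\varphi(\delta+2\alpha\eta)$, non-negativity and monotonicity of the \Matern kernel, the distance estimate $\norm{\delta+2\alpha\eta}_2\geq 2\alpha\norm{\eta}_\infty-1$ (which indeed produces the factor $e^{\kappa}$ in $C_1$), the decomposition into $\norm{\cdot}_\infty$-shells with $\big|\{\eta:\norm{\eta}_\infty=m\}\big|\leq 2d(2m)^{d-1}$, and a bound of the form $K_\nu(z)\leq c\,z^{-\nicefrac{1}{2}}e^{-z}$ (the paper cites Lemma~2 of \cite{bachmayr2020unified} rather than re-deriving it from the integral representation, but that is immaterial). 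After these steps everything reduces, exactly as in the paper, to estimating $\sum_{m\geq 1} m^{d+\nu-\nicefrac{3}{2}}e^{-2\alpha\kappa m}$. It is here that your plan has a genuine gap: you propose to dominate the shells beyond the first by the \emph{complete} radial integral $\int_0^\infty r^{d+\nu-\nicefrac{3}{2}}e^{-2\alpha\kappa r}~\mrm{d}r=\Gamma(d+\nu-\nicefrac{1}{2})\,(2\alpha\kappa)^{-(d+\nu-\nicefrac{1}{2})}$. By extending the integral down to $0$ you have discarded the exponential smallness of the tail, leaving a bound that decays only polynomially in $\alpha\kappa$. The statement you must prove carries the factor $e^{-2\vartheta\alpha\kappa}$ in front of \emph{both} terms, and no purely polynomially decaying quantity can be bounded by $C\,e^{-2\vartheta\alpha\kappa}\,(\alpha\kappa)^{-(1+d)}$ uniformly as $\alpha\to\infty$. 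Your final ``absorption'' step, $s^{p}e^{-s}\leq C\,e^{-\vartheta s}$, cannot repair this, because it presupposes an exponential factor $e^{-s}$ that your tail bound no longer contains; moreover, optimising that elementary inequality produces constants of the form $\big(p/(e(1-\vartheta))\big)^{p}$, not the closed form $\vartheta=\Gamma(d+\nu+\nicefrac{1}{2})^{-\nicefrac{1}{(d+\nu-\nicefrac{1}{2})}}$ appearing in the lemma.

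The paper avoids this by never letting go of the lower limit. Since $2\alpha\kappa>d+\nu-\nicefrac{3}{2}$ guarantees that $t\mapsto t^{\beta}e^{-\gamma t}$ (with $\beta=d+\nu-\nicefrac{3}{2}$, $\gamma=2\alpha\kappa$) is decreasing on $[1,\infty)$, it bounds $\sum_{m\geq 2}m^{\beta}e^{-\gamma m}\leq\int_1^\infty t^{\beta}e^{-\gamma t}~\mrm{d}t=\gamma^{-(1+\beta)}\,\Gamma(1+\beta,\gamma)$, i.e.~an \emph{upper incomplete} Gamma function, which still decays exponentially in $\gamma$. It then invokes Alzer's inequality (packaged as Lemma~\ref{app:lem:auxSum} in the Appendix) to obtain $\Gamma(1+\beta,\gamma)\leq\Gamma(2+\beta)\,e^{-\vartheta\gamma}$; this is precisely where the rate $e^{-2\vartheta\alpha\kappa}$ on the second term and the specific Gamma-function expression for $\vartheta$ originate. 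If you want to salvage a version of your comparison-with-a-complete-integral idea, you would have to split the exponential \emph{first}, writing $e^{-\gamma m}=e^{-\vartheta\gamma m}\,e^{-(1-\vartheta)\gamma m}$, extract $e^{-\vartheta\gamma}$ (or $e^{-2\vartheta\gamma}$ over $m\geq 2$), and only then compare the remaining sum with $\int_0^\infty r^{\beta}e^{-(1-\vartheta)\gamma r}~\mrm{d}r$; done in your order, the exponential decay is irretrievably lost.
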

\begin{proof}
    Since $\rho$ is monotonically decreasing away from the origin, we can bound
    \begin{align}
        \prdi{\varphi}{2 \alpha}(\delta) - \varphi(\delta)
            &= \suml{\eta \in \Z^d \setminus \{0\}} \rho\big(\ell^{-1} \norm{\delta + 2 \alpha \eta}_2\big) \\
            &\leq \suml{\eta \in \Z^d \setminus \{0\}} \rho\Big(\ell^{-1} \norm{\delta + 2 \alpha \eta}_\infty\Big) \\
            &\leq \suml{\eta \in \Z^d \setminus \{0\}}
                \rho\Big(\ell^{-1} \big(2 \alpha \norm{\eta}_\infty - 1\big)\Big),
    \end{align}
    where in the last bound we have used that $\norm{\delta + \mu}_{\infty} \geq \norm{\mu}_{\infty} - 1$ for any
    $\delta \in [-1, 1]^d$ and $\mu \in \Z^d$. Absolute convergence of the above series is assured by the PSF, so we
    can rearrange the summation over $\Z^d \setminus \{0\}$ in 
    terms of \enquote{shells} of $\mu \in \Z^d$ with a given $\infty$-norm, that is,
    \begin{equation}    \label{eq:shellSum}
        \suml{\eta \in \Z^d \setminus \{0\}}
            \rho\Big(\ell^{-1} \big(2 \alpha \norm{\eta}_{\infty} - 1\big)\Big)
            = \sumlim{m = 1}{\infty} \suml{\substack{\eta \in \Z^d \setminus \{0\}, \\ \norm{\eta}_\infty = m}}
                \rho \big(\ell^{-1} (2 \alpha m - 1)\big).
    \end{equation}
    Using the bound $\big|\{ \mu \in \Z^d : \norm{\mu}_{\infty} = m\}\big| \leq 2 d (2m)^{d - 1}$ and inserting
    the definition of the \Matern covariance \eqref{eq:maternDefinition} with $\kappa \coloneqq \nicefrac{\sqrt{2 \nu}}{\ell}$,
    we arrive at
    \begin{align}
        \prdi{\varphi}{2 \alpha}(\delta) - \varphi(\delta) 
            &\leq d 2^d \sumlim{m = 1}{\infty} m^{d - 1} \rho \big(\ell^{-1} (2 \alpha m - 1) \big) \\
            \label{eq:intermediateError}
            &= d \frac{2^d 2^{1 - \nu}}{\Gamma(\nu)} \sumlim{m = 1}{\infty}
                m^{d-1} \big(\kappa (2 \alpha m - 1)\big)^{\nu}~K_{\nu}\big(\kappa (2 \alpha m - 1)\big).
    \end{align}
    The assumption $2 \alpha \kappa \geq \nicefrac{3}{2}$ ensures that the smallest argument to 
    $K_{\nu}$ is larger than $\nicefrac{1}{2}$ . This implies that Lemma~2 from
    \cite{bachmayr2020unified} is applicable, which states that
    \begin{equation}
        K_{\nu}(t) \leq e \frac{2^{2 \nu - 1} \Gamma(\nu)}{\sqrt{2 t}} e^{-t},
        \quad \text{for } \nu > 0 \text{ and } t \geq \frac{1}{2}.
    \end{equation}
    Inserting this into \eqref{eq:intermediateError} yields
    \begin{equation}    \label{eq:intermediateError1}
        \prdi{\varphi}{2 \alpha}(\delta) - \varphi(\delta) \leq
            d 2^{d + 2\nu - 1} e^{1 + \kappa} (\alpha \kappa)^{\nu - \nicefrac{1}{2}}
               \sumlim{m=1}{\infty} m^{d + \nu - \nicefrac{3}{2}} e^{-2 \alpha \kappa m}. 
    \end{equation}
    Finally, the assumption $2 \alpha \kappa > d + \nu - \nicefrac{3}{2}$ 
    assures that we can apply Lemma~\ref{app:lem:auxSum} in the Appendix 
    with $\beta = d + \nu - \nicefrac{3}{2} < 2 \alpha \kappa = \gamma$.
    This results in the desired bound.
\end{proof}

Lemma~\ref{lem:periodError} provides a non-asymptotic upper bound for the periodisation error,
assuming $\alpha$ is sufficiently large to satisfy the lemma’s conditions. 
This assumption is typically no stricter than $\alpha \geq 1$.
Lemma~\ref{lem:periodError} shows that the
periodisation error decays asymptotically as $\mc{O}(\alpha^{\nu - \nicefrac{1}{2}} e^{-2 \vartheta \kappa \alpha})$,
that is, exponentially in $\alpha$. We now turn to the second error component, the truncation error from the averaged
random field series.

\begin{lemma}   \label{lem:truncError}
    The covariance function $\prdi{\varphi}{2 \alpha, n}$ of the DNA GRF
    \eqref{eq:combinedRandomField} satisfies
    \begin{equation}
        \norm{\prdi{\varphi}{2 \alpha, n} - \prdi{\varphi}{2 \alpha}}_\infty
            \leq C_2~(\alpha \kappa)^{2 \nu + d} n^{-2 \nu},
    \end{equation}
    where
    $C_2 \coloneqq d C_{\nu} (2 \kappa)^d \pi^{-(2 \nu + d)}
                \Big(1+\frac{1}{2 \nu}\Big)$ and $C_{\nu}$ is the constant in \eqref{eq:maternHankel}. 
\end{lemma}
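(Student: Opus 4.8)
The plan is to exploit the explicit Fourier-series representation of the two periodisations and reduce everything to a tail estimate of the \Matern spectral density. By Proposition~\ref{prop:combRFCov} and Definition~\ref{def:periodisation}, both $\prdi{\varphi}{2\alpha}$ and $\prdi{\varphi}{2\alpha, n}$ are Fourier series with the same coefficients $(2\alpha)^{-d}\fourier{\varphi}((2\alpha)^{-1}\mu)$ and the same frequencies; the only difference is that the truncated field sums over $\Z_n^d$ rather than over all of $\Z^d$. Hence their difference is exactly the tail of the series over $\{\mu \in \Z^d : \norm{\mu}_\infty > n\}$. First I would write this out and then take absolute values: since every exponential $e^{-\pi i \alpha^{-1}\mu\cdot\delta}$ has modulus one and, crucially, the \Matern coefficients $\fourier{\varphi}((2\alpha)^{-1}\mu)$ are strictly positive (by Bochner's theorem, or directly from \eqref{eq:maternHankel}), the triangle inequality yields a bound that is uniform in $\delta$, namely
\[
    \norm{\prdi{\varphi}{2\alpha,n} - \prdi{\varphi}{2\alpha}}_\infty \leq (2\alpha)^{-d}\suml{\norm{\mu}_\infty > n}\fourier{\varphi}\big((2\alpha)^{-1}\mu\big).
\]
This removes the $\delta$-dependence and all phase factors at once, leaving a pure tail sum of spectral weights; absolute convergence is guaranteed by Assumption~\ref{assump:statCov} and the PSF.

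The second step is to make the coefficients explicit. Using $\fourier{\varphi}(\zeta) = \ell^d[\mc{H}_d\rho](\ell\norm{\zeta}_2)$ together with the \Matern Hankel transform \eqref{eq:maternHankel} and $\kappa = \sqrt{2\nu}\,\ell^{-1}$, a short computation gives $\fourier{\varphi}((2\alpha)^{-1}\mu) = C_\nu \kappa^{2\nu}\big(\kappa^2 + \pi^2\alpha^{-2}\norm{\mu}_2^2\big)^{-(\nu + d/2)}$, so that the factor $C_\nu\kappa^{2\nu}$ pulls out of the sum. I would then discard the non-negative $\kappa^2$ term in the denominator to obtain the pure power law $\big(\pi^2\alpha^{-2}\norm{\mu}_2^2\big)^{-(\nu+d/2)} = \pi^{-(2\nu+d)}\alpha^{2\nu+d}\norm{\mu}_2^{-(2\nu+d)}$, which is where the factors $\pi^{-(2\nu+d)}$ and the powers of $\alpha\kappa$ in the final constant originate.

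The third and main step is the geometric reduction of the resulting lattice tail to a one-dimensional series. Using $\norm{\mu}_2 \geq \norm{\mu}_\infty$ and organising the lattice into $\ell^\infty$-shells $\{\norm{\mu}_\infty = m\}$, whose cardinality is bounded by $2d(2m)^{d-1}$ exactly as in the proof of Lemma~\ref{lem:periodError}, the $d$-dimensional sum collapses into $d2^d\sum_{m > n} m^{d-1}m^{-(2\nu+d)} = d2^d\sum_{m>n} m^{-(2\nu+1)}$. The remaining $p$-series I would control by comparison with the integral $\int_n^\infty x^{-(2\nu+1)}\wrt{x} = (2\nu)^{-1}n^{-2\nu}$, treating the boundary shell separately so as to produce the factor $(1 + \tfrac{1}{2\nu})$ together with the decay rate $n^{-2\nu}$. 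Collecting $C_\nu\kappa^{2\nu}$, the prefactor $(2\alpha)^{-d}$, the factors $2^d\kappa^d$ and $\pi^{-(2\nu+d)}$ coming from the shell count and the power law, and the $p$-series estimate then assembles into the claimed bound $C_2(\alpha\kappa)^{2\nu+d}n^{-2\nu}$.

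I expect the bookkeeping in this last step to be the only genuine obstacle: one must keep the powers of $\alpha\kappa$ aligned through the substitution of the spectral density, the discarding of $\kappa^2$, and the shell summation, and the precise shape of $C_2$ — in particular the $(1+\tfrac{1}{2\nu})$ factor — is sensitive to how conservatively the boundary term of the $p$-series is estimated. Everything upstream (the tail representation, the positivity of the coefficients, and the explicit \Matern density) is routine given the earlier results.
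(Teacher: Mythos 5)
Your proposal follows essentially the same route as the paper's proof: both express the difference as the tail $\sum_{\norm{\mu}_\infty > n}$ of the Fourier series, use positivity of the \Matern spectral coefficients to kill the phase factors, insert the explicit Hankel-transform form of $\fourier{\varphi}$, reduce to a one-dimensional sum via $\norm{\mu}_2 \geq \norm{\mu}_\infty$ and the shell bound $\big|\{\mu : \norm{\mu}_\infty = m\}\big| \leq 2d(2m)^{d-1}$, and finish with an integral test whose boundary term produces the $\big(1+\frac{1}{2\nu}\big)$ factor and the $n^{-2\nu}$ rate. The one caveat is constant bookkeeping, which you flagged yourself: since you (correctly) retain the prefactor $(2\alpha)^{-d}$ that the paper's own proof silently drops after its first display, your assembly actually yields $d\, C_\nu\, \pi^{-(2\nu+d)}\big(1+\frac{1}{2\nu}\big)(\alpha\kappa)^{2\nu}\, n^{-2\nu}$ rather than the stated $C_2\,(\alpha\kappa)^{2\nu+d}\, n^{-2\nu}$ — a discrepancy of a factor $(2\alpha\kappa^2)^d$ that mirrors an inconsistency between the lemma's statement and the paper's own derivation (whose final constant carries $\kappa^{-d}$, not the stated $\kappa^{d}$), and which is immaterial to the substance of the result.
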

\begin{proof}
    As per \eqref{eq:covRepresentationTruncated}, the covariance of the truncated random field is 
    given by
    \begin{equation}
        \prdi{\varphi}{2 \alpha, n}(\delta)
            = (2 \alpha)^{-d} \suml{\mu \in \Z_n^d}  \fourier{\varphi}\big((2 \alpha)^{-1} \mu\big)
                e^{i \pi \alpha^{-1} \mu \cdot \delta}, \quad \delta \in \R^d.
    \end{equation}
    Comparing this with the definition of the periodised covariance \eqref{eq:periodisationDefinition}, we
    see that
    \begin{equation}
        \big|~\prdi{\varphi}{2 \alpha, n} (\delta) - \prdi{\varphi}{2 \alpha}(\delta)~\big|
            = \left|\suml{\mu \in \Z^d \setminus \Z_n^d} \fourier{\varphi}\big((2 \alpha)^{-1} \mu\big)
                e^{i \pi \alpha^{-1} \mu \cdot \delta} \right|
            \leq \suml{\mu \in \Z^d \setminus \Z_n^d} \fourier{\varphi}\big((2 \alpha)^{-1} \mu\big),
    \end{equation}
    where we have used Bochner's theorem to guarantee positivity of $\fourier{\varphi}\big((2 \alpha)^{-1} \mu\big)$. 
    
    Next, define
    \begin{equation}
        f: [0, \infty) \to \R, \quad
            x \mapsto C_{\nu} \kappa^{-d}
                \left(1 + \frac{\pi^2}{\alpha^2 \kappa^2} x^2 \right)^{-(\nu + \nicefrac{d}{2})}.
    \end{equation}
    Recalling the Hankel transform \eqref{eq:maternHankel} of the \Matern covariance function, we see that
    this definition is motivated by the fact that
    $f(\norm{\mu}_2) = \fourier{\varphi}\big((2 \alpha)^{-1} \mu\big)$. Together with the observation
    that $f$ is monotonously decreasing and $\norm{\mu}_2 \geq \norm{\mu}_{\infty}$ for all
    $\mu \in \Z^d$, we can thus write
    \begin{equation}
        \suml{\mu \in \Z^d \setminus \Z_n^d} \!\! \fourier{\varphi}\big((2 \alpha)^{-1} \mu\big)
            = \suml{\mu \in \Z^d \setminus \Z_n^d} \!\! f\big(\norm{\mu}_2\big)
            \leq \suml{\mu \in \Z^d \setminus \Z_n^d} \!\! f\big(\norm{\mu}_{\infty}\big)
            = \sumlim{m = n+1}{\infty} \!\! \suml{\substack{\mu \in \Z^d, \\\norm{\mu}_{\infty} = m}} \! f(m).
    \end{equation}
    where we have arranged the summation over $\Z^d \setminus \Z_n^d$ in terms of \enquote{shells},
    analogously to the Proof of Lemma~\ref{lem:truncError}.
    Using the bound $\big|\{ \mu \in \Z^d : \norm{\mu}_{\infty} = m\}\big| \leq 2 d (2m)^{d - 1}$ and
    again the fact that $f$ is monotonously decreasing, we can employ a simple integral test, which
    yields
    \begin{equation}  \label{eq:intTestEstimate0}  
        \sumlim{m = n+1}{\infty} \! \suml{\substack{\mu \in \Z^d, \\\norm{\mu}_{\infty} = m}} \! f(m)
            \leq d 2^d \left((n+1)^{d-1} f(n+1) + \int\limits_{n+1}^{\infty}x^{d-1} f(x)~\mrm{d}x \right).
    \end{equation}
    For the first term we can immediately invoke the definition of the function $f$ and see
    \begin{equation}    \label{eq:intTestEstimate1}
        (n+1)^{d-1} f(n+1) \leq C_{\nu} \kappa^{-d}
            \left(\frac{\alpha \kappa}{\pi}\right)^{2 \nu + d} n^{-(1 + 2 \nu)}.
    \end{equation}
    Similarly, for the second term we get
    \begin{align}    
        \int\limits_{n+1}^{\infty} x^{d-1} f(x)~\mrm{d}x
            &\leq C_{\nu} \kappa^{-d}
                \left(\frac{\alpha \kappa}{\pi}\right)^{2 \nu + d} \int\limits_{n+1}^{\infty}x^{-(1+2\nu)}~\mrm{d}x
                \nonumber 
            \leq \frac{C_\nu \kappa^{-d}}{2\nu} \left(\frac{\alpha \kappa}{\pi}\right)^{2 \nu + d} n^{-2 \nu}.
            \label{eq:intTestEstimate2}
    \end{align}
    Combining this with \eqref{eq:intTestEstimate0} and \eqref{eq:intTestEstimate1} 
     finally gives the desired bound.
\end{proof}

The upper bound in Lemma~\ref{lem:truncError} grows with $\alpha$. Counteracting this requires
scaling the size $n$ of the truncation accordingly.
However, in the large-scale applications we consider, this does not pose a restriction. As with
Lemma~\ref{lem:periodError}, Lemma~\ref{lem:truncError} provides an explicit non-asymptotic upper bound. In
Proposition~\ref{prop:covErrorMatern} below, we combine these results to derive an asymptotic
upper bound on the total covariance error, which represents the central result of this section. The
proposition demonstrates that by increasing the value of the scaling parameter $\alpha$ and scaling
the value of $n$ appropriately, it is possible to achieve arbitrarily small values of the covariance error.
This theoretical justification will be underpinned by numerical experiments presented in Section~\ref{sec:numExp},
where it will be shown that even for moderate $n$ and the smallest value of $\alpha$, namely $\alpha = 1$,
errors are achieved which are negligible for many practical purposes.
\begin{proposition}   \label{prop:covErrorMatern}
    Assume that $\nu \geq \nicefrac{1}{2}$. Then there exists a constant
    $C > 0$ independent of $n \in \N$ and $\alpha \geq 1$, as well as values
    $n_0 \in \N$ and $\alpha_0 \geq 1$, such that
    \begin{equation}
        \norm{\prdi{\varphi}{2 \alpha, n} - \varphi}_{\infty} \leq
            C \Big( \alpha^{2 \nu + d} n^{-2 \nu} + \alpha^{\nu - \nicefrac{1}{2}}
            e^{-2 \vartheta \kappa \alpha} \Big),
    \end{equation}
    for all $n > n_0$ and $\alpha > \alpha_0$.
\end{proposition}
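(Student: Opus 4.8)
The plan is to combine the two preceding lemmas via the triangle-inequality splitting already introduced just before Lemma~\ref{lem:periodError}, namely
\begin{equation*}
    \norm{\varphi - \prdi{\varphi}{2 \alpha, n}}_{\infty} \leq
        \norm{\varphi - \prdi{\varphi}{2 \alpha}}_{\infty}
        + \norm{\prdi{\varphi}{2 \alpha} - \prdi{\varphi}{2 \alpha, n}}_\infty,
\end{equation*}
and then to read off the asymptotic rate from each term. First I would bound the truncation term by Lemma~\ref{lem:truncError}, which gives a contribution of order $(\alpha \kappa)^{2\nu + d} n^{-2\nu}$; since $\kappa = \sqrt{2\nu}/\ell$ is a fixed constant (depending only on $\nu$ and $\ell$, not on $\alpha$ or $n$), this is $\mc{O}(\alpha^{2\nu + d} n^{-2\nu})$, matching the first term in the claimed bound. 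Second I would bound the periodisation term by Lemma~\ref{lem:periodError}. That lemma produces the two-term bound $C_1 e^{-2\vartheta\alpha\kappa}\big((\alpha\kappa)^{\nu - \nicefrac{1}{2}} + \Gamma(d+\nu+\nicefrac{1}{2})(\alpha\kappa)^{-(1+d)}\big)$; the point is that the second summand decays strictly faster than the first as $\alpha \to \infty$, because $(\alpha\kappa)^{-(1+d)}$ is dominated by $(\alpha\kappa)^{\nu - \nicefrac{1}{2}}$ for all large $\alpha$ (indeed $\nu - \nicefrac{1}{2} > -(1+d)$ whenever $\nu \geq \nicefrac{1}{2}$). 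Hence the bracket is $\mc{O}\big((\alpha\kappa)^{\nu - \nicefrac{1}{2}}\big)$ and the whole periodisation term is $\mc{O}\big(\alpha^{\nu - \nicefrac{1}{2}} e^{-2\vartheta\kappa\alpha}\big)$, matching the second term.

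The remaining bookkeeping is to verify the hypotheses under which each lemma applies and to fold the various fixed constants into a single $C$. Lemma~\ref{lem:periodError} requires $\nu \geq \nicefrac{1}{2}$ (assumed in the proposition) together with $2\alpha\kappa > \max\{\nicefrac{3}{2},\, d + \nu - \nicefrac{3}{2}\}$; this last inequality is a finite lower bound on $\alpha$, so I would set $\alpha_0$ to be at least $(2\kappa)^{-1}\max\{\nicefrac{3}{2},\, d + \nu - \nicefrac{3}{2}\}$ (and $\geq 1$), after which the periodisation estimate is valid for all $\alpha > \alpha_0$. Lemma~\ref{lem:truncError} carries no extra restriction beyond the standing setup, so it holds for all $n$ and all $\alpha \geq 1$; I would nevertheless introduce $n_0$ purely to absorb the asymptotic comparison between the two summands of the periodisation bound into a clean constant, choosing $n_0$ and possibly enlarging $\alpha_0$ so that for $n > n_0$, $\alpha > \alpha_0$ the subdominant $(\alpha\kappa)^{-(1+d)}$ term is at most the dominant one. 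Taking $C$ to be the maximum of $C_2 \kappa^{2\nu+d}$, of $2 C_1 \kappa^{\nu - \nicefrac{1}{2}}$, and of whatever constant controls the folded lower-order term, then yields the stated inequality.

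The main obstacle is purely one of asymptotic comparison rather than genuine analysis: I must be careful that the $\Gamma(d+\nu+\nicefrac{1}{2})(\alpha\kappa)^{-(1+d)}$ piece of Lemma~\ref{lem:periodError} really is dominated by the $(\alpha\kappa)^{\nu - \nicefrac{1}{2}}$ piece uniformly once $\alpha > \alpha_0$, since both are multiplied by the same exponential $e^{-2\vartheta\alpha\kappa}$ and the comparison is of the polynomial prefactors only. Because $\vartheta$ and $\kappa$ do not depend on $\alpha$ or $n$, and the Gamma factor is a fixed number, this is immediate for large $\alpha$; I would simply record the threshold explicitly. A secondary point worth a sentence is that the constant $C$ genuinely may depend on $d$, $\nu$, $\ell$ (through $\kappa$, $C_1$, $C_2$, $C_\nu$, and $\vartheta$) but not on $n$ or $\alpha$, which is exactly what the proposition asserts; so no uniformity in the model parameters is being claimed and nothing deeper is needed.
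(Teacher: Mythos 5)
Your proposal is correct and follows essentially the same route as the paper's own proof: the triangle-inequality split into periodisation and truncation errors, invoking Lemma~\ref{lem:truncError} directly for the first term, and absorbing the subdominant $\Gamma(d+\nu+\nicefrac{1}{2})(\alpha\kappa)^{-(1+d)}$ piece of Lemma~\ref{lem:periodError} into the $(\alpha\kappa)^{\nu-\nicefrac{1}{2}}$ piece for $\alpha$ beyond a finite threshold $\alpha_0$. Your only looseness is attributing a role to $n_0$ in that absorption (it involves only $\alpha$; $n_0$ is effectively vacuous, as you yourself note), which does not affect correctness.
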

\begin{proof}
    From the fact that $\lim_{\alpha \to \infty}
    \Gamma (d + \nu + \nicefrac{1}{2}) \alpha^{-(d + \nu - \nicefrac{1}{2})} = 0$,
    we can infer the existence of a constant $a_1$ independent of $\alpha$ and a
    real number $\alpha_1 \geq 1$, such that
    $\Gamma(d + \nu + \nicefrac{1}{2})(\alpha \kappa)^{-(1 + d)}
            \leq a_1 (\alpha \kappa)^{\nu - \nicefrac{1}{2}}$,
    for all $\alpha > \alpha_0$.
    Choosing $\alpha_0 = \max \{\alpha_1, \alpha_2\}$, where $\alpha_2 \geq 1$ is chosen
    so as to satisfy the assumption of Lemma~\ref{lem:periodError}, that is,
    $2 \alpha_2 \kappa > \max \{\nicefrac{3}{2}, d + \nu - \nicefrac{3}{2}\}$, ensures the existence of $c_1 > 0$,
    such that
    \begin{equation}
        \norm{\varphi - \prdi{\varphi}{2 \alpha}}_{\infty}
        \leq c_1~\alpha^{\nu - \nicefrac{1}{2}} e^{-2 \vartheta \kappa \alpha}.
    \end{equation}
    The subsequent choice of $C = \max \{c_1, C_2\}$, together with the statement
    of Lemma~\ref{lem:truncError} then implies the proposition.
\end{proof}

\begin{remark}  \label{rem:boundingBox}
    The construction of the DNA GRF in Definition~\ref{def:combRF} assumes that the domain $D$ is a
    $d$-dimensional hypercube and a fast evaluation using DSTs and DCTs entails a uniform mesh. However, for
    applications where these prerequisites are not met, the target domain can be embedded in a hypercube and
    the evaluations can be performed on a uniform mesh. Evaluations at other, non-uniformly spaced points
    within the (embedded) target domain can then be obtained by interpolation from the uniformly-spaced values
    on the hypercube. For more details 
    see \cite{graham2018circulant, simpson2016going}.
\end{remark}

    \subsection{Connection to SPDE-Based Sampling} \label{sec:frame:spdeConnection}
        The previous section described a sampling framework that efficiently generates GRF
realisations for moderate to large random fields. However, the use of DSTs and DCTs
remains a limitation due to their necessity for regularly spaced grids and their
inherent challenges in massively parallel settings. In this regard, the SPDE
approach as described in Section~\ref{sec:prel:spdeSampling}, leverages the sparsity
and mature solution techniques within the FE framework in order to
achieve a scalable sampling algorithm. However, as with periodisation-based
approaches, restricting a random field to a bounded domain and imposing boundary
conditions introduces aliasing effects, which are a major obstacle to efficiency.

In \cite{lindgren2011explicit} the authors show that the covariance function
of a GRF with homogeneous Neumann boundary
conditions, sampled using the SPDE approach can be represented in terms of a
cosine-series (\cite{lindgren2011explicit},
Theorem 1). The authors refer to this covariance function as a
\emph{folded covariance}. Using a similar framework, the authors in
\cite{khristenko2019analysis} derive explicit error estimates for
such folded covariances in terms of different boundary conditions on the 
entire boundary. 
Both analyses prove that GRFs sampled with the SPDE approach using either 
homogeneous Neumann or Dirichlet boundary conditions exhibit a loss of isotropy that
is stronger the closer one is to the boundary. The suggested remedy is to compute the
solution on an extended domain and then truncate it back to the desired domain.
This approach, however, gives rise to challenges pertaining to memory,
implementation and scalability, which were previously alluded to in the
introduction.
Translating the ideas of Section~\ref{sec:sampFr} to the SPDE approach, we
propose replacing the extension of the domain by an averaging of several
solutions on the initial domain. This yields genuinely isotropic GRFs and addresses
the issues with using oversampling in the SPDE approach mentioned above.

Informally, Proposition~\ref{prop:SPDECorrespondence} below states that
in the limit $n\to \infty$ each of the GRFs $u_{\alpha, n}^b$ in \eqref{eq:combRFBndry} 
solves a particular, scaled SPDE on the hypercube $D_\alpha = (0, \alpha)^d$ with
the choice of the boundary conditions encoded in the index vector $b \in \{0, 1\}^d$,
as defined in Section~\ref{sec:higherDim}.
Therefore, in analogy to
\eqref{eq:combinedRandomField}, solutions to these SPDEs can be averaged 
to define a Dirichlet-Neumann averaged GRF 
for the SPDE approach. 
The DNA GRF has a periodised covariance function and
satisfies the explicit periodisation error bound in Lemma~\ref{lem:periodError}.
Thus, Proposition~\ref{prop:SPDECorrespondence} implies that the SPDE approach
of \cite{lindgren2011explicit} and the DCT/DST-based approach from
Section~\ref{sec:higherDim} can essentially be viewed as two different means of
discretising the same, infinite-dimensional GRF.

Throughout this section we will use the same notation as in
Section~\ref{sec:higherDim} to encode boundary conditions on
$\partial D_{\alpha} = \bigcup_{j = 1}^{d} \Gamma_{j, \alpha}$ in a vector
$b \in \{0, 1\}^d$. We are interested in weak solutions to the stochastic
boundary value problem (BVP)
\begin{subequations}    \label{eq:spdeBVP}
\begin{alignat}{2}
    \label{eq:spdeBVP_interior}
    \big(\kappa^2 - \Delta\big)^{\beta} u
        &= \sqrt{C_{\nu}} \kappa^{-\nu}~\mc{W} \quad &&\text{in } D_{\alpha} \quad \text{a.s.} \\
    \label{eq:spdeBVP_bndry}
    b_j~\partial_{\hat{n}} u + (1 - b_j)~u &= 0 \quad &&\text{on } \Gamma_{j, \alpha},
        \quad j = 1, \ldots, d,
\end{alignat}
\end{subequations}
where $\beta = \nicefrac{\nu}{2} + \nicefrac{d}{4}$, $\kappa = \sqrt{2 \nu} \ell^{\,-1}$ and
$C_{\nu}$ from \eqref{eq:maternHankel}. 
We define $u_{\alpha}^b$ in an analogous manner to
$u_{\alpha, n}^b$ from \eqref{eq:combRFBndry}, except that
we substitute the summation over the truncated lattice
$\Z_n^d$ with a series over the entire lattice $\Z^d$.
As the terms containing the sine-expansion don't involve
the constant function, we further introduce the index 
sets $\mc{I}_b$, replacing the $\mc{I}_{b, n}$ from
\eqref{eq:truncatedIndexSets} via
$$\mc{I}_b \coloneqq \mc{I}_{b_1} \times \ldots \times \mc{I}_{b_d},
        \quad \text{where} \quad \mc{I}_{0} \coloneqq \N \cup \{0\}
            \quad \text{and} \quad \mc{I}_{1} \coloneqq \N. $$
With this notation at hand, we can state the definition of $u_{\alpha}^b$
more precisely as
\begin{equation}    \label{eq:fullCombinedRF}
   u_{\alpha}^{b} (x, \omega) \coloneqq \suml{\mu \in \mc{I}_{b}} \xi_\mu (\omega)
            \sqrt{\farg{\varphi}{(2\alpha)^{-1} \mu}}~e_{\mu, \alpha}^{b}(x),
                \quad b \in \{0, 1\}^d, 
\end{equation}
with the stochastic coefficients $\xi_{\mu}$ as before and 
basis functions $e_{\mu, \alpha}^b$ from \eqref{eq:perRFModes}.

\begin{proposition} \label{prop:SPDECorrespondence}
Let $\varphi$ be a stationary \Matern covariance function with parameters
$\nu  \geq \nicefrac{1}{2}$ and $\ell > 0$, and consider the case $d \in \{1, 2, 3\}$. Then, for any
$b \in \{0, 1\}^d$, the GRF $u_{\alpha}^b$ as in \eqref{eq:fullCombinedRF} is a solution to 
the stochastic BVP~\eqref{eq:spdeBVP}.
\end{proposition}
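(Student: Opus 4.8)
The plan is to recognise the family $\{e_{\mu,\alpha}^{b}\}_{\mu\in\mc{I}_b}$ from \eqref{eq:perRFModes} as the orthonormal eigenbasis of the operator $L=\kappa^2-\Delta$ on $D_\alpha=(0,\alpha)^d$ subject to the boundary conditions encoded by $b$, and then to read the stochastic BVP off the spectral definition \eqref{eq:fracPowerAction} of $L^\beta$. First I would check the one-dimensional factors: $-\partial_{x_j}^2\,\e{b_j}{\mu_j}(x_j/\alpha)=(\pi\mu_j/\alpha)^2\,\e{b_j}{\mu_j}(x_j/\alpha)$, so the product structure of \eqref{eq:perRFModes} gives $L\,e_{\mu,\alpha}^{b}=\lambda_\mu\,e_{\mu,\alpha}^{b}$ with $\lambda_\mu:=\kappa^2+\pi^2\alpha^{-2}\norm{\mu}_2^2$. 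The boundary conditions are verified factor by factor: the cosine factor has vanishing normal derivative at $x_j\in\{0,\alpha\}$ while the sine factor vanishes there, so each $e_{\mu,\alpha}^{b}$ satisfies the homogeneous Neumann/Dirichlet condition on $\Gamma_{j,\alpha}$ prescribed by $b$ in \eqref{eq:spdeBVP_bndry}. The normalisation $(2/\alpha)^{\nicefrac{d}{2}}$ renders this family an orthonormal basis of $\ltwo{D_\alpha}$, a completeness fact I would use repeatedly.

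Next I would fix the (weak) solution concept. Using the basis, I define the forcing white noise through its action $\mc{W}(v):=\suml{\mu\in\mc{I}_b}\xi_\mu\iprod{v,e_{\mu,\alpha}^{b}}$ on $v\in\ltwo{D_\alpha}$; orthonormality together with the white-noise covariance \eqref{eq:whiteNoiseDef} shows $\exv\,\mc{W}(v)\mc{W}(w)=\iprod{v,w}$, so $\mc{W}$ is a genuine $\ltwo{D_\alpha}$ white noise whose coefficients against the eigenbasis are exactly the $\xi_\mu$ of \eqref{eq:fullCombinedRF}. Interpreting \eqref{eq:spdeBVP_interior} weakly and using self-adjointness of $L^\beta$, the claim reduces to the identity $\iprod{u_\alpha^{b},L^\beta v}=\suml{\mu\in\mc{I}_b}\xi_\mu\sqrt{\fourier{\varphi}\big((2\alpha)^{-1}\mu\big)}\,\lambda_\mu^{\beta}\iprod{v,e_{\mu,\alpha}^{b}}$ for test functions $v\in\mc{D}(L^\beta)$. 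Thus everything hinges on showing that the scalar $\sqrt{\fourier{\varphi}\big((2\alpha)^{-1}\mu\big)}\,\lambda_\mu^{\beta}$ is a single constant independent of $\mu$.

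The crux is this algebraic identity, which I would establish from the explicit \Matern Hankel transform \eqref{eq:maternHankel}. Evaluating $\fourier{\varphi}(\zeta)=\ell^{\,d}\big[\mc{H}_d\rho\big]\big(\ell\norm{\zeta}_2\big)$ at $\zeta=(2\alpha)^{-1}\mu$ and using $\kappa^2=2\nu\,\ell^{\,-2}$, the argument of the power collapses, since $2\nu+\big(2\pi\ell\cdot\tfrac{\norm{\mu}_2}{2\alpha}\big)^2=\ell^{\,2}\lambda_\mu$. Together with $2\beta=\nu+\nicefrac{d}{2}$ this yields $\fourier{\varphi}\big((2\alpha)^{-1}\mu\big)=C_\nu\,\kappa^{2\nu}\,\lambda_\mu^{-2\beta}$, whence $\sqrt{\fourier{\varphi}\big((2\alpha)^{-1}\mu\big)}\,\lambda_\mu^{\beta}=\sqrt{C_\nu}\,\kappa^{\nu}$ for every $\mu$. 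Substituting back gives $\iprod{u_\alpha^{b},L^\beta v}=\sqrt{C_\nu}\,\kappa^{\nu}\,\mc{W}(v)$ for all admissible $v$, which is precisely the weak form of the interior equation \eqref{eq:spdeBVP_interior}, while the boundary conditions \eqref{eq:spdeBVP_bndry} hold by the factorwise check above.

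The step I expect to be the genuine obstacle is not this calculation but the functional-analytic bookkeeping surrounding it. Since $\fourier{\varphi}\big((2\alpha)^{-1}\mu\big)\sim\norm{\mu}_2^{-(2\nu+d)}$, the series \eqref{eq:fullCombinedRF} converges in $\ltwo{D_\alpha}$ almost surely, but $L^\beta u_\alpha^{b}$ is white noise — a distribution outside $\ltwo{D_\alpha}$ — so $u_\alpha^{b}\notin\mc{D}(L^\beta)$ and \eqref{eq:fracPowerAction} cannot be applied to $u_\alpha^{b}$ directly; the interior equation must be read in the weak form above, with $L^\beta$ transferred onto the test function. Making this rigorous requires specifying the space in which \eqref{eq:spdeBVP} is posed and justifying the interchange of summation and pairing, which is where the restrictions $\nu\geq\nicefrac{1}{2}$ and $d\in\{1,2,3\}$ enter, ensuring the requisite decay and, consistently with the standing assumption that realisations lie in $C(D_\alpha)$, that the pointwise field $u_\alpha^{b}$ is well defined. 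Uniqueness of the weak solution I would obtain from the same spectral representation, since $L^\beta$ is injective on its domain because $\lambda_\mu\geq\kappa^2>0$.
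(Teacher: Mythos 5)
Your proposal is correct in substance and follows essentially the same route as the paper's proof: identify the $e_{\mu,\alpha}^b$ as an $\ltwo{D_\alpha}$-orthonormal eigenbasis of $L = \kappa^2 - \Delta$ compatible with the boundary conditions encoded by $b$, use the explicit \Matern spectral density to show that $\big(\kappa^2 + \pi^2\alpha^{-2}\norm{\mu}_2^2\big)^{\beta}\sqrt{\fourier{\varphi}\big((2\alpha)^{-1}\mu\big)}$ is independent of $\mu$, and match the coefficients of $u_\alpha^b$ against the white-noise coefficients $\mc{W}(e_{\mu,\alpha}^b)$. The remaining differences are cosmetic: you construct $\mc{W}$ out of the given $\xi_\mu$ and verify the covariance \eqref{eq:whiteNoiseDef}, whereas the paper starts from a given $\mc{W}$ and identifies the i.i.d.\ standard normal variables $\mc{W}(e_{\mu,\alpha}^b)$ with the $\xi_\mu$ -- equivalent coupling arguments; and your weak formulation with $L^\beta$ moved onto test functions $v \in \mc{D}(L^\beta)$ coincides with the paper's projected equation \eqref{eq:projectedSPDE}, which is exactly your formulation tested at $v = e_{\mu,\alpha}^b$. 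Your closing remarks on why the equation can only be read weakly (since $u_\alpha^b \notin \mc{D}(L^\beta)$) correspond to the regularity facts the paper imports from its cited references.

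One discrepancy deserves flagging, though it is not of your making. Your identity $\fourier{\varphi}\big((2\alpha)^{-1}\mu\big) = C_\nu\, \kappa^{2\nu}\,\eta_{\mu,\alpha}^{-2\beta}$ is the correct consequence of \eqref{eq:maternHankel}, and it yields the forcing amplitude $\sqrt{C_\nu}\,\kappa^{+\nu}$, which is also the scaling needed for a unit-variance \Matern field in the SPDE framework. The paper's \eqref{eq:LbetaEVRepresentation}, by contrast, asserts $\eta_{\mu, \alpha}^{-2 \beta} = C_\nu^{-1} \kappa^{2 \nu}\fourier{\varphi}\big((2 \alpha)^{-1} \mu\big)$, i.e.\ it places $\kappa^{2\nu}$ on the wrong side; this sign error in the exponent propagates into the $\kappa^{-\nu}$ prefactor of \eqref{eq:spdeBVP_interior}, with which the paper's proof is then internally consistent. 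Consequently, your claim that $\iprod{u_\alpha^{b},L^\beta v}=\sqrt{C_\nu}\,\kappa^{\nu}\,\mc{W}(v)$ is \enquote{precisely} the weak form of \eqref{eq:spdeBVP_interior} is, as literally written, off by a factor $\kappa^{2\nu}$: you prove the proposition for the BVP with corrected amplitude $\sqrt{C_\nu}\,\kappa^{\nu}$, and a careful write-up should note this mismatch rather than assert agreement. A similar purely notational slip occurs in \eqref{eq:spdeBVP_bndry}: given the paper's convention that $b_j = 0$ encodes Neumann conditions and $e^{(0)}_{\mu_j} = \cos$, the roles of $b_j$ and $1-b_j$ there are interchanged, so your factorwise boundary check verifies the intended, rather than the literal, conditions. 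Neither point affects the mathematical substance of your argument.
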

\begin{proof}
    Recall the definition $L = \kappa^2 - \Delta$ from Section~\ref{sec:prel:spdeSampling}.
    We consider the weak formulation of the SPDE~\eqref{eq:SPDEProblem} given by
    \begin{equation}    \label{eq:spdeBVPWeak}
        \iprod{L^{\beta} u, v} = \sqrt{C_\nu} \kappa^{-\nu} \mc{W}(v), \quad \text{for all } v \in \ltwo{D_{\alpha}}
            \quad \text{a.s.},
    \end{equation}
    similar to \cite{bolin2020numerical, cox2020regularity, lindgren2011explicit}.
    The assumptions on the parameters $\nu$, $\ell$ and $d$ along with our choice
    of domain $D_{\alpha}$, ensure existence of a solution, as well as sufficient
    regularity of the domain of $L^{\beta}$, in the sense that $L^{\beta}$ can be
    extended to a bounded operator $L^{\beta}: H^1(D_{\alpha}) \to \ltwo{D_{\alpha}}$
    according to Lemma~2.1 in \cite{bolin2020numerical},
    in combination with Lemma~2(ii) and Remark~4 in~\cite{cox2020regularity}.
   
    The action of $L^{\beta}$ on its domain in \eqref{eq:fracPowerAction} is
    determined by the spectral properties of the underlying operator $L$.
    We observe that for any $b \in \{0, 1\}^d$, the
    $\{e_{\mu, \alpha}^{b}\}_{\mu \in \mc{I}_{b}}$
    are eigenfunctions of $L$. This can be seen from the fact that for any 
    $x \in \R^d$, 
    \begin{align}
        \big[L\,e_{\mu, \alpha}^{b}\big](x)
            &= \kappa^2 e_{\mu, \alpha}^{b} (x) + \Big(\sumlim{k=1}{d} \big(\pi \alpha^{-1} \mu_k\big)^2 \Big)
                e_{\mu, \alpha}^{b}(x) \\
            &= \left(\kappa^2 + \pi^2 \alpha^{-2} \norm{\mu}_2^2\right)~e_{\mu, \alpha}^{b}(x) \\
            &\eqqcolon \eta_{\mu, \alpha} e_{\mu, \alpha}^{b}(x).
    \end{align}
    Moreover, the eigenvalues $\eta_{\mu, \alpha}$ are closely related to the Fourier transform
    \eqref{eq:maternHankel} of the stationary \Matern covariance function $\varphi$ via
    \begin{equation} \label{eq:LbetaEVRepresentation}
        \eta_{\mu, \alpha}^{-2 \beta}
            = ( \kappa^2 + \pi^2 \alpha^{-2} \norm{\mu}_2^2)^{-2 \beta}
            = C_\nu^{-1} \kappa^{2 \nu}~\fourier{\varphi}((2 \alpha)^{-1} \mu),
    \end{equation}
    where $C_{\nu}$ is defined in \eqref{eq:maternDefinition}.
    In order to incorporate the
    boundary conditions \eqref{eq:spdeBVP_bndry}, we project onto the subspace spanned by those eigenfunctions
    corresponding to the fixed $b \in \{0, 1\}^d$ of the BVP \eqref{eq:spdeBVP}
    (c.f.~Section~2.2 in~\cite{khristenko2019analysis} and Example~2.2 in~\cite{bolin2020numerical}).
    As a consequence, Proposition~\ref{prop:SPDECorrespondence} reduces to the claim that $u_{\alpha}^b$ satisfies
    \begin{equation}    \label{eq:projectedSPDE}
        \iprod{L^{\beta} u_{\alpha}^b(\cdot, \omega), e_{\mu, \alpha}^b}
            = \sqrt{C_{\nu}} \kappa^{-\nu} \mc{W}(e_{\mu, \alpha}^b)(\omega)
                \quad \text{for all } \mu \in \mc{I}_b \quad \text{a.s.}
    \end{equation}
    Inserting $u_{\alpha}^b$ into the left-hand side of
    \eqref{eq:projectedSPDE} for a fixed $\omega \in \Omega$ and using orthonormality of the $e_{\mu, \alpha}^b$,
    as well as \eqref{eq:LbetaEVRepresentation} yields
    \begin{align} 
        \iprod{L^{\beta} u_{\alpha}^b(\cdot, \omega), e_{\mu, \alpha}^b}
            &= \suml{\gamma \in \mc{I}_b} \eta_{\gamma, \alpha}^{\beta} \iprod{u_{\alpha}^b(\cdot, \omega),
                e_{\gamma, \alpha}^b} \iprod{e_{\gamma, \alpha}^b, e_{\mu, \alpha}^b} 
           = \eta_{\mu, \alpha}^{\beta} \iprod{u_{\alpha}^b(\cdot, \omega), e_{\mu, \alpha}^b}.
            \label{eq:innerProdLbeta}
    \end{align}
    Further, the inner product occurring in above equation is given by
    \begin{equation}    \label{eq:innerProdU}
        \iprod{u_{\alpha}^b(\cdot, \omega), e_{\mu, \alpha}^b}
            = \suml{\gamma \in \mc{I}_b} \xi_{\gamma}(\omega)
                \big(\fourier{\varphi}((2\alpha)^{-1})\big)^{\nicefrac{1}{2}}
                    \iprod{e_{\gamma, \alpha}^b, e_{\mu, \alpha}^b}
            =  \xi_{\mu} (\omega) \big(\fourier{\varphi}((2\alpha)^{-1})\big)^{\nicefrac{1}{2}}.
    \end{equation}
    Combining \eqref{eq:innerProdLbeta} and \eqref{eq:innerProdU}, equation \eqref{eq:projectedSPDE}
    becomes
    \begin{equation}    \label{eq:whiteNoiseCoeff}
        \xi_{\mu}(\omega) = \mc{W}(e_{\mu, \alpha}^b)(\omega), \quad \text{for all } \mu \in \mc{I}_b,
        ~a.s.
    \end{equation}
    Finally, as the $e_{\mu, \alpha}^b$ are orthonormal, the Gaussian random variables 
    $\xi_{\mu}^{\mc{W}} \coloneqq \mc{W}(e_{\mu, \alpha}^b)$ are uncorrelated
    and each follow a standard normal distribution according to \eqref{eq:whiteNoiseDef},
    which implies \eqref{eq:whiteNoiseCoeff} and thus concludes the proof.
\end{proof}

Due to
Proposition~\ref{prop:SPDECorrespondence}, the framework developed
in Section~\ref{sec:higherDim} can be readily applied to solutions to the SPDE \eqref{eq:spdeBVP},
as demonstrated in the following Corollary.
\begin{corollary}
    Let $\varphi$ be a stationary \Matern covariance function with parameters
    $\nu \geq \nicefrac{1}{2}$, $\ell > 0$ and $d \in \{1, 2, 3\}$. Define
    the DNA GRF $u_{\alpha}$ by
    \begin{equation}    \label{eq:fullCombSPDE}
        u_\alpha(x, \omega) \coloneqq 2^{-\nicefrac{d}{2}} \! \suml{b \in \{0, 1\}^d} u_{\alpha}^b(x, \omega),
            \quad x \in D_\alpha,~\omega \in \Omega,
    \end{equation}
    where for each $b \in \{0, 1\}^b$ the $u_{\alpha}^b$ are solutions to the corresponding BVP \eqref{eq:spdeBVP}.
    Then, $u_{\alpha}$ is isotropic and its covariance function is given by $\prdi{\varphi}{2 \alpha}$. 
    If in addition $2 \alpha \kappa > \nicefrac{3}{2}$ holds, then
    $\prdi{\varphi}{2 \alpha}$ satisfies the error bound \eqref{eq:covErrorMatern} from
    Lemma~\ref{lem:periodError}.
\end{corollary}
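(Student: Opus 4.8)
The plan is to recognise this Corollary as the untruncated ($n \to \infty$) counterpart of Proposition~\ref{prop:combRFCov}, with Proposition~\ref{prop:SPDECorrespondence} supplying the bridge between the SPDE solutions and the explicit series fields. First I would invoke Proposition~\ref{prop:SPDECorrespondence} to replace, for each $b \in \{0,1\}^d$, the SPDE solution $u_\alpha^b$ of \eqref{eq:spdeBVP} by its explicit spectral series \eqref{eq:fullCombinedRF}. Once this identification is made, the averaged field \eqref{eq:fullCombSPDE} is literally the object obtained from Definition~\ref{def:combRF} upon replacing every truncated index set $\mc{I}_{b,n}$ by its full analogue $\mc{I}_b$; that is, it is the $n\to\infty$ version of the DNA GRF $u_{\alpha,n}$ in \eqref{eq:combinedRandomField}.

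With this reduction, the covariance computation essentially repeats the proof of Proposition~\ref{prop:combRFCov}, the sole difference being that no lattice points are discarded. Vanishing of the mean is immediate from linearity of expectation together with the centred, orthonormal coefficients $\xi_\mu$. For the covariance I would expand $\exv u_\alpha(x,\cdot)\, u_\alpha(y,\cdot)$ using independence of the $u_\alpha^b$, apply the product-to-sum identity \eqref{eq:trigIdBasis} to split each summand into an $(x-y)$-part and an $(x+y)$-part, and then use the cancellation identity \eqref{eq:cancellation} so that the anisotropic boundary contributions involving $x+y$ cancel exactly upon averaging over $b$. What survives is a sum over the full lattice $\Z^d$ rather than over $\Z_n^d$ as in \eqref{eq:covRepresentationTruncated}, which by Definition~\ref{def:periodisation} yields $\prdi{\varphi}{2\alpha}(x-y)$ in place of the truncated $\prdi{\varphi}{2 \alpha, n}(x-y)$.

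The isotropy claim then follows because the averaging has removed precisely the non-stationary $(x+y)$ terms, leaving a covariance that depends only on $x-y$; by \eqref{eq:PSFstationaryCovariance} this is the genuine periodisation $\sum_{\eta\in\Z^d}\varphi(\cdot+2\alpha\eta)$ of the isotropic $\varphi$, hence isotropic in the same sense as in the one-dimensional discussion preceding \eqref{eq:combRF1d}. For the final assertion, I would observe that $\prdi{\varphi}{2\alpha}-\varphi$ is exactly the quantity estimated in Lemma~\ref{lem:periodError}; since $\varphi$ is Matérn with $\nu\geq\nicefrac{1}{2}$ and $\alpha$ meets the stated lower bound, the error bound \eqref{eq:covErrorMatern} applies directly to $\prdi{\varphi}{2\alpha}$.

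The main obstacle is analytic rather than algebraic: one must justify that in the untruncated setting all the series manipulations — interchanging expectation with the infinite sums, applying the trigonometric identities term by term, and rearranging over the hyperoctants of $\Z^d$ — are legitimate. This rests on absolute convergence, guaranteed by the Poisson summation formula under Assumption~\ref{assump:statCov}; since the Matérn covariance satisfies that assumption, the rearrangements used in the proof of Proposition~\ref{prop:combRFCov} carry over to the full lattice without modification. A secondary point to verify is the compatibility of the hypothesis $2\alpha\kappa>\nicefrac{3}{2}$ stated in the Corollary with the full condition $2\alpha\kappa>\max\{\nicefrac{3}{2},\,d+\nu-\nicefrac{3}{2}\}$ of Lemma~\ref{lem:periodError}, which is what is actually required to invoke the error estimate.
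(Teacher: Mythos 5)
Your proposal is correct and follows essentially the same route the paper intends: the corollary is obtained by combining Proposition~\ref{prop:SPDECorrespondence} (identifying each SPDE solution $u_\alpha^b$ with the series \eqref{eq:fullCombinedRF}) with the untruncated version of the covariance computation in Proposition~\ref{prop:combRFCov}, and then invoking Lemma~\ref{lem:periodError}, with absolute convergence of all rearrangements guaranteed by Assumption~\ref{assump:statCov} via the Poisson summation formula. Your closing caveat is well spotted but reflects an imprecision in the paper's statement rather than a gap in your argument: the corollary assumes only $2\alpha\kappa > \nicefrac{3}{2}$, whereas Lemma~\ref{lem:periodError} formally requires $2\alpha\kappa > \max\{\nicefrac{3}{2},\, d+\nu-\nicefrac{3}{2}\}$, which is strictly stronger for, e.g., $d=3$ and moderate $\nu$.
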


A notable advantage of the DNA approach is that the covariance of the DNA GRF \eqref{eq:fullCombSPDE}
is genuinely isotropic for any value of $\alpha$. The covariance function is given by $\prdi{\varphi}{2 \alpha}$.
In contrast, using a single set of boundary conditions on an oversampling domain $\widetilde{D} \supset D$ the
GRFs produced by the SPDE approach are never truly isotropic, even for large amounts of oversampling. It can only
be shown that the error decays roughly at the same rate as the \Matern covariance function with the oversampling
size \cite[Theorem 3.2]{khristenko2019analysis}.
Moreover, under very mild assumptions, Lemma~\ref{lem:periodError} provides an explicit upper
bound on the covariance error for the resulting GRF, irrespective of the method used for the
subsequent discretisation of the BVP~\eqref{eq:spdeBVP}.

\begin{remark}
    A key consideration for sampling based on the SPDE approach is that minor discrepancies
    in the covariance function of the continuous problem are inconsequential, as they are dominated by the
    inevitable errors due to discretising the SPDE. These discretisation errors can be shown (both
    theoretically and in practice) to decay at most algebraically, with a rate limited by the
    smoothness of the GRF \cite{simpson2016going, cox2020regularity},
    in contrast to the exponential decay with respect to the parameter $\alpha$ of the DNA 
    periodisation error in Lemma~\ref{lem:periodError}.
    This aligns well with the other advantages of DNA sampling, which offers isotropic GRFs and natural
    parallelisation of the simulation.
    We will explore this aspect further in the numerical
    experiments below.
\end{remark}

\setlength{\belowcaptionskip}{-1em}
\section{Numerical Experiments} \label{sec:numExp}
    The first central result in Section~\ref{sec:sampFr} is the construction of
the periodisation-based sampling framework. Figure~\ref{fig:periodisation} illustrates the 
three kinds of periodisation introduced in Section~\ref{sec:sampFr}: (i) \;the naive periodisation
$\prdi{\varphi}{\alpha}$ from~\eqref{def:periodisation}, attained by evaluating~\eqref{def:periodisedRF}
and discarding half the computed realisation; (ii) \;the periodisation $\prdi{\varphi}{2 \alpha}$ for the DNA GRF
\eqref{def:combRF} as per Proposition~\ref{prop:combRFCov}; and (iii) \;the periodisation
$\prdi{\varphi}{\alpha, \mrm{CE}}$ as in \eqref{eq:circulantEmbeddingPeriodisation},
implied by the sampling procedure of CE.

As discussed in Section~\ref{sec:prel:prd}, sampling with the exact covariance on the target
domain using CE is possible only if the corresponding periodisation, $\prdi{\varphi}{\alpha, \mrm{CE}}$,
is positive-definite, in the sense of Definition~\ref{def:periodisation}. For Matérn covariances, this
requirement can 
be met through expanding the computational domain—referred to as \enquote{padding}
\cite{graham2018analysis}. Within the periodisation framework of
Section~\ref{sec:prel:prd}, this expansion corresponds to increasing $\alpha$.

In the one-dimensional case, Figure~\ref{fig:overhead} illustrates the factor $\tau$ by which
the FFT input length in CE exceeds that for a single one of the $2^d = 2$ DSTs/DCTs used in 
DNA sampling with the same parameters. 
Although obtaining two realisations from the real and imaginary parts of the CE evaluation makes the
effort comparable for fields with low regularity (characterised by small $\nu$ and
short correlation lengths $\ell$), the effort remains sensitive to these parameters and increases
sharply as either $\nu$ or $\ell$ grows. Values of $\nu$ higher than those
in Figure~\ref{fig:overhead} did not provide a positive-definite periodisation with a factor of less than $\tau = 256$
for the given correlation lengths.
\begin{figure}[ht] 
    \centering
    \begin{minipage}[t]{0.48\linewidth}
        \centering
        \includegraphics[width=\linewidth]
    {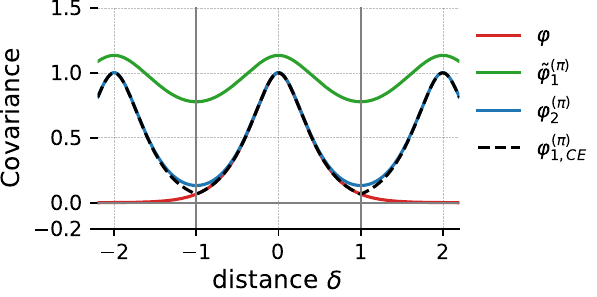}
        \caption{Three periodisations of the \Matern covariance function $\varphi$ with parameters $\nu = 2$ and
        $\ell = \nicefrac{2}{5}$, corresponding to the sampling algorithms discussed in Section~\ref{sec:sampFr}.
        The scaling parameter is $\alpha = 1$.}
        \label{fig:periodisation}
    \end{minipage} \hfill
    \begin{minipage}[t]{0.48\linewidth}
        \centering
        \includegraphics[width=\linewidth]
        {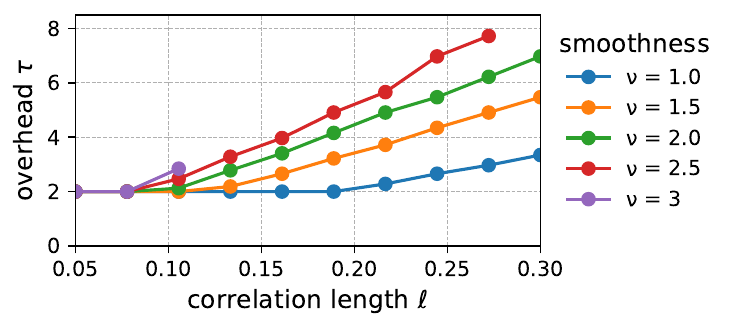}
        \caption{
        Overhead for \Matern covariance with smoothness $\nu$ on $D=(0,1)$ with
                 $n=1500$ grid points. Minimal embedding determined via bisection; missing values indicate
                 padding exceeding 256.} 
        \label{fig:overhead}
    \end{minipage}\vspace{3ex}
\end{figure}

A key idea in the DNA sampling framework is that modifying
the basis functions of a periodised field \eqref{def:periodisedRF} corresponds to altering the
boundary conditions prescribed to realisations of the associated GRF. Figure~\ref{fig:covBndryConditions} presents
heat-maps of empirical covariance matrices for periodic and homogeneous Dirichlet/Neumann boundary conditions.
For the one-dimensional case, the core observation in Section~\ref{sec:sampFr} is that the analytically
derived covariance functions for homogeneous Dirichlet and homogeneous Neumann boundary conditions
in \eqref{eq:dirCov} and \eqref{eq:neuCov} contain anisotropic terms that precisely cancel each other
out. This effect is illustrated in Figure~\ref{fig:dirneuCov}, which shows the empirical marginal variances
for the 1D fields.
\renewcommand{\topfraction}{.8}
\renewcommand{\floatpagefraction}{.7}
\begin{figure}[ht] 
    \centering
    \begin{minipage}[t]{0.48\linewidth}
        \centering
        \includegraphics[width=0.8\linewidth]{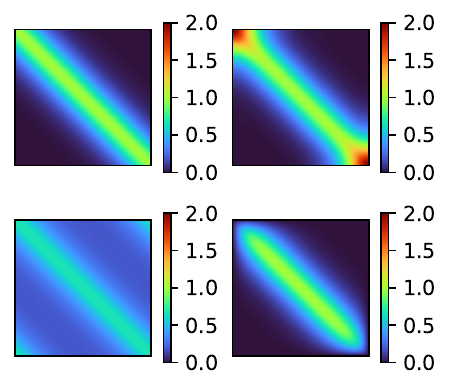}
        \caption{Heatmaps of target (top left) and empirical covariances
        for \Matern fields ($\nu = 2$, $\ell = 0.15$) with 
        $N = 5 \cdot 10^5$ realisations. Clockwise from top right:
        homogeneous Neumann, homogeneous Dirichlet, periodic boundary conditions.}
        \label{fig:covBndryConditions}
    \end{minipage}
    \hfill
    \begin{minipage}[t]{0.48\linewidth}
        \centering
        \includegraphics[width=\linewidth, height=0.65\linewidth]{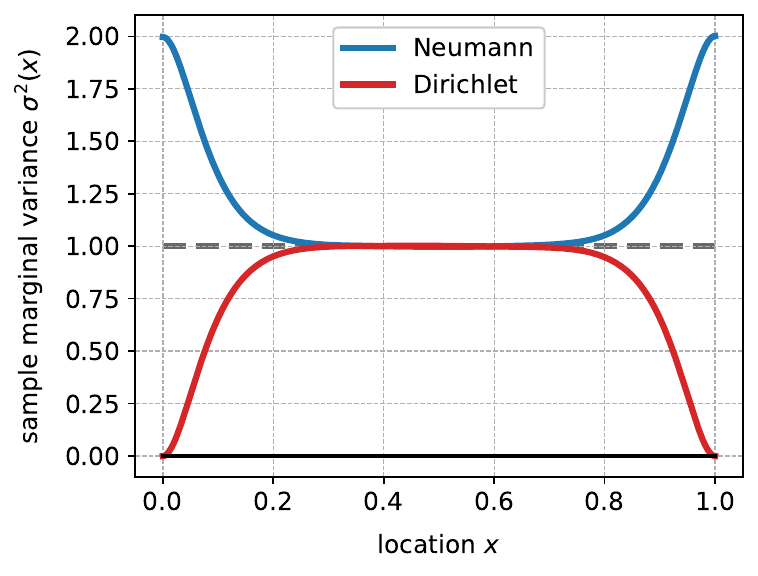}
        \caption{Empirical marginal variances for a GRF with \Matern covariance function
        with $\nu = 2.0$ and $\ell = 0.15$, estimated using $N = 5 \cdot 10^5$ realisations.
        Values correspond to diagonals of the empirical covariance matrices in
        Figure~\ref{fig:covBndryConditions}.}
        \label{fig:dirneuCov}
    \end{minipage}
\end{figure}

In two dimensions, generating a realisation of the DNA GRF requires averaging $2^d = 4$
realisations, each imposed with a different boundary condition. As illustrated in Figure~\ref{fig:margVar2d}, this
construction yields a constant marginal variance across the domain, aligning with the expected behaviour of
isotropic fields. The figure also shows how the four GRFs with different boundary
conditions, which evidently have non-uniform marginal variances, interact to
produce a constant marginal variance.

\begin{figure}[ht] 
    \centering
    \includegraphics[width=0.6\linewidth, height=0.34\linewidth]{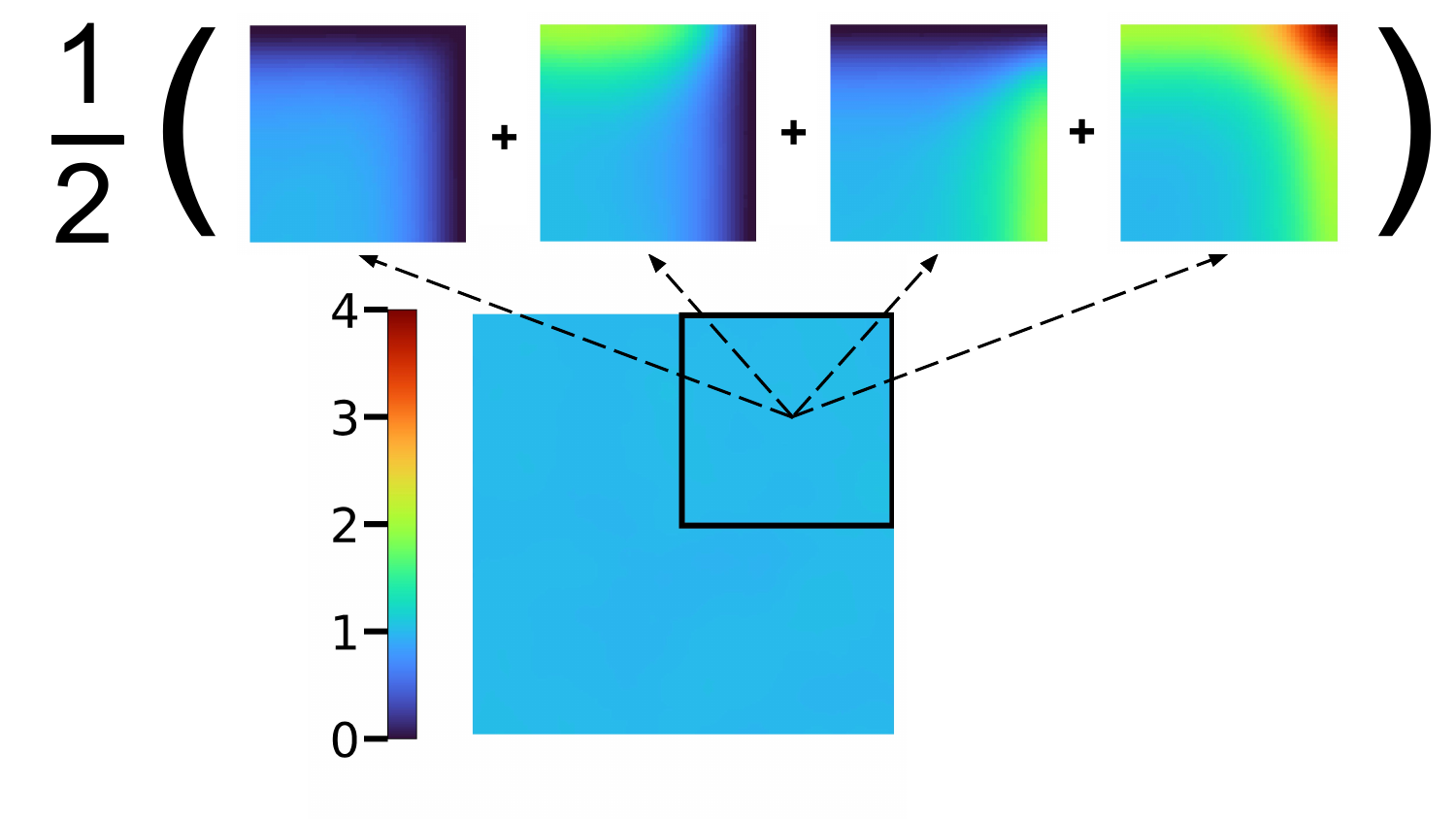}
    \caption{Empirical marginal variances for the $2^d = 4$ combinations of boundary conditions
    used for DNA sampling, along with the resulting empirical marginal variance
    of the DNA GRF. Parameters for the \Matern covariance: $\nu = 1.5$, $\ell = 0.2$; estimated from
    $N = 10^4$ realisations on a $150 \times 150$ grid.}
    \label{fig:margVar2d}
\end{figure}

In several places throughout the paper, it was noted that the errors due to periodisation in the
DNA approach are in general negligible in practice.
Table~\ref{tab:covErrors} provides estimates of the maximal covariance error, as defined in
\eqref{eq:covErrorDef}, for five different covariance functions in the case $d=1$. Among these
functions are the Gaussian covariance function, $\varphi_{\mrm{G}}$, and the Cauchy covariance
function, $\varphi_{\mrm{C}}$, which for any $\delta \in \R$ are defined through
$\varphi_{\mrm{G}} (\delta) = e^{-\frac{\delta^2}{2 \ell^{\,2}}}$ and
$\varphi_{\mrm{C}}(\delta) = (1 + \ell^{-2} \delta^2)^{-1}$.
The required Fourier transforms are given by
$\fourier{\varphi}_{\mrm{G}} (s) = \sqrt{2 \pi} \ell e^{-2 \pi^2 \ell^{\,2} s^2}$
and $\fourier{\varphi}_{\mrm{C}} (s) = \pi \ell e^{-2 \pi \ell |s|}$
respectively.
\begin{table}[ht] 
\centering
\begin{tabular}{l|ccccc}
    $\ell$  & $\nu = 0.5$ & $\nu = 2$ & $\nu = 8$ & Gaussian & Cauchy \\
    \midrule
$0.025$ & 1.77e-2 & 1.33e-2 & 1.30e-2 & 1.24e-2 & 1.30e-2 \\
$0.05$  & 1.53e-2 & 1.16e-2 & 1.13e-2 & 1.11e-2 & 1.36e-2 \\
$0.1$   & 1.39e-2 & 1.08e-2 & 9.3e-3 & 9.8e-3 & 1.83e-2 \\
$0.2$   & 1.31e-2 & 8.3e-3 & 8.9e-3 & 8.3e-3 & 5.63e-2 \\ 
\end{tabular}
\caption{Monte-Carlo estimates of the maximal covariance error as defined in \eqref{eq:covErrorDef}
based on $N = 1.6 \cdot 10^6$ realisations on a grid with $n = 1500$ points.}
\label{tab:covErrors}
\end{table}
The truncation parameter $n$ in \eqref{def:combRF} controls the highest frequencies
occurring in a periodised random field. Consequently, as Lemma~\ref{lem:truncError} suggests, for a
fixed $n$, the truncation error tends to dominate for fields with low regularity, e.g.~small correlation
lengths $\ell$ or low values of $\nu$ in the \Matern case. The periodisation error, on the other hand,
behaves in the opposite way: as Lemma~\ref{lem:periodError} indicates, high regularity, for example manifested
in higher values of $\nu$ and particularly $\ell$ lead to a greater periodisation error, which then
becomes the primary error contribution. This balance is visible in Table~\ref{tab:covErrors}, reflecting
how truncation and periodisation errors trade off under different parameter values.
As the error in Table~1 decreases with increasing correlation length $\ell$ for the \Matern and Gaussian
covariances, one can infer that the discretisation error still dominates in these cases. Only for the highly
regular realisations associated with the Cauchy covariance, the error increases with $\ell$, indicating that
here, the periodisation error dominates.
The overall error in Table~\ref{tab:covErrors} is in the low single-digit percentages, which
seems acceptable for most applications.

The GRFs for the covariances presented in Table~\ref{tab:covErrors} were generated with $\alpha = 1$,
thus without requiring any padding or domain extension. Even for slowly decaying covariance functions
like the Cauchy covariance, this setup yielded relatively low error. In contrast, for the CE method, it
was not possible to achieve a positive-definite embedding for the $\nu = 8$, Gaussian, and Cauchy
covariance functions, despite allowing for a domain extension of up to 1024 times the original size,
with the same parameter settings. This highlights a substantial advantage of the DNA
sampling approach in handling such covariances efficiently.

In Figure~\ref{fig:spdeRates}, we compare the maximal covariance error along a diagonal of the target
domain, $D = [0, 1]^2$, for the SPDE sampling approach with domain extension and the DNA sampling approach.
This comparison is performed for various values of $\alpha$,
which correspond to different domain extension sizes. As expected, increasing the size of the domain
extension reduces the error for both approaches. However, the absolute error magnitude is significantly
lower with the averaging approach.
\begin{figure}[ht]
    \centering
    \includegraphics[width=0.85\linewidth, height=0.43\linewidth]{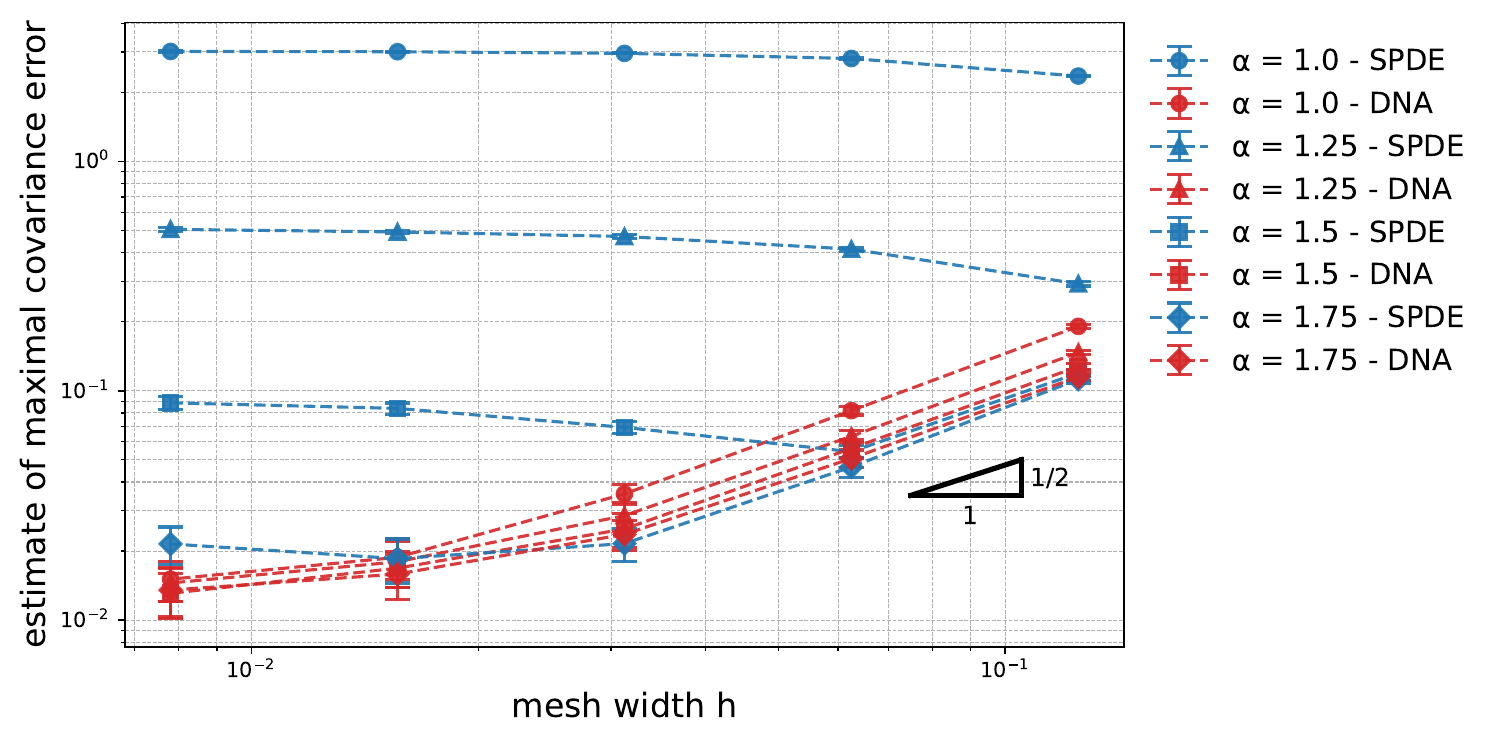}
    \caption{Monte-Carlo estimates of the maximal covariance error along the diagonal of the target
    domain $D=(0,1)^2$, based on $N=2 \cdot 10^6$ realisations. Parameters for the SPDE
    are $\nu = 1.0$ and $\ell = 0.25$. We compare the DNA sampling approach from
    Section~\ref{sec:sampFr} with the classical SPDE approach with homogeneous Neumann boundary
    conditions on the entire boundary under varying domain extensions. Error bars correspond
    to one empirical standard deviation of the error, estimated using cross-validation with
    $40$ batches of $5 \cdot 10^4$ samples each.}
    \label{fig:spdeRates}
\end{figure}

In the SPDE approach, the total covariance error naturally separates into two components: one from the
FE discretisation and another from the intrinsic covariance error. Figure~\ref{fig:spdeRates} suggests that, when
comparing the two methods, the averaging approach allows the discretisation error to dominate across a
wider range of mesh widths. Consequently, refining the mesh in the averaging approach leads to further
error reduction and thus higher accuracy.
For the classical approach with domain extension, a common heuristic recommends extending the domain by
roughly one to two correlation lengths (see, e.g., \cite{khristenko2019analysis, lindgren2011explicit, roininen2014whittle}).
In Figure~\ref{fig:spdeRates}, we see that for the considered example, with $\ell = \nicefrac{1}{4}$
and $\nu = 1$, following this heuristic would mean that refining the mesh to a width smaller than
$h = \nicefrac{1}{16}$ or $h = \nicefrac{1}{32}$ respectively, would no longer decrease the error,
as the covariance error then dominates.

The reduced error in the DNA sampling approach comes with an increased computational cost,
though this overhead is in general manageable. Excluding the (negligible) averaging and setup costs
for boundary conditions, a single evaluation of the DNA GRF in two dimensions is four
times as expensive as using the standard SPDE approach with the same domain extension.

Although the error in the DNA GRF decreases with domain extension, the improvements are less
pronounced than in the oversampling 
approach and they are dominated by the discretisation error across most
tested mesh widths. This suggests that, for many practical purposes, it may be advantageous to avoid
increasing the domain size altogether. Doing so simplifies implementation and lowers computational costs
for comparable error.

While the extension of the DNA approach to SPDE-based sampling in Proposition~\ref{prop:SPDECorrespondence}
assumes that the domain $D$ is a hypercube
(c.f.~Remark~\ref{rem:boundingBox}), it is interesting to explore whether locally
averaging GRFs with Dirichlet and Neumann boundary conditions extends to more general domains.
Figure~\ref{fig:nonsquare} demonstrates this idea for two non-cuboid domains. For the unit disc
in Figure~\ref{subfig:nonsquare:disc}, which has no corners, only two GRFs with homogeneous
Neumann and Dirichlet boundary conditions were averaged, instead of the usual $2^d = 4$ GRFs. The resulting
averaged GRF shows constant marginal variance, indicating isotropic covariance, in contrast to the
clearly anisotropic covariances for homogeneous Neumann/Dirichlet conditions.
A similar effect appears in the convex regions of the more general domain in
Figure~\ref{subfig:nonsquare:ship}, but at least the basic DNA approach seems to break down
in non-convex corners.
\begin{figure}[bt] 
    \centering
    \begin{subfigure}{0.48\textwidth}
        \centering
        \includegraphics[width=0.8\linewidth, height=0.6\linewidth]{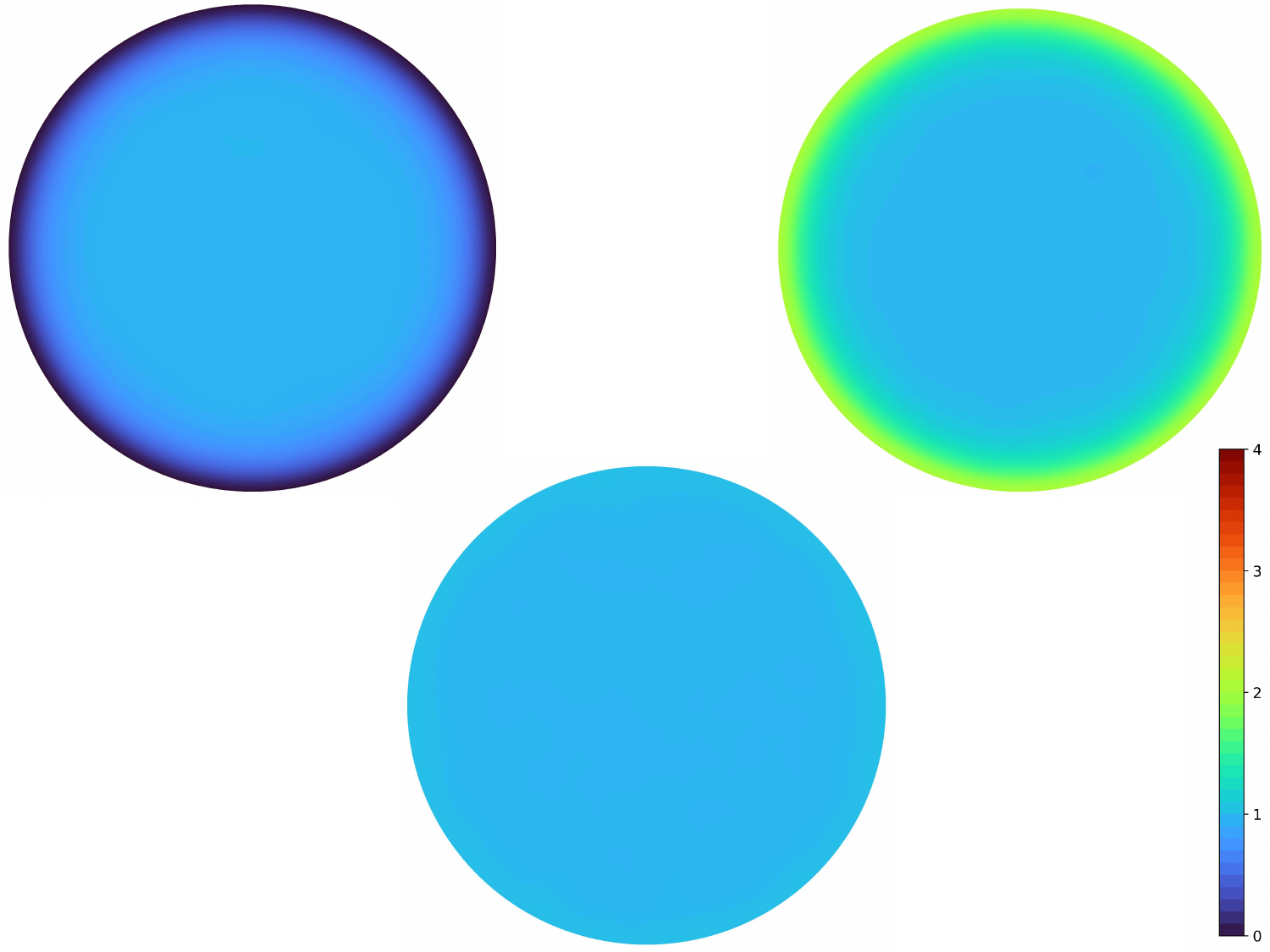} 
        \caption{Unit disc with $n = 5400$ cells.}
        \label{subfig:nonsquare:disc}
    \end{subfigure}
    \begin{subfigure}{0.48\textwidth}
        \centering
        \includegraphics[width=0.8\linewidth, height=0.6\linewidth]{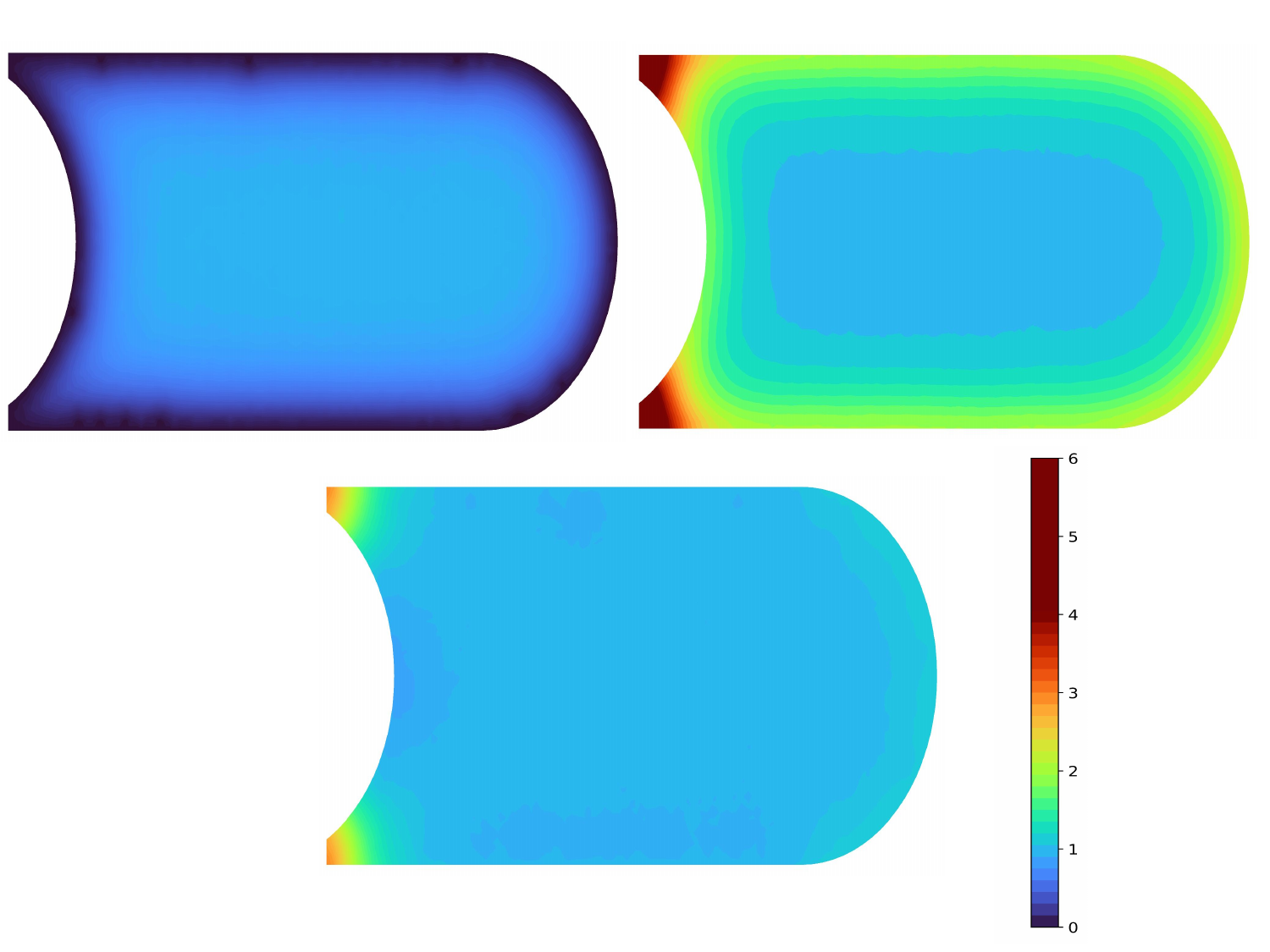} 
        \caption{Irregular domain with $n=6563$ cells.}
        \label{subfig:nonsquare:ship}
    \end{subfigure}\vspace{2ex}
    \caption{Marginal variance of GRFs with \Matern covariance ($\nu = 1$, $\ell = 0.2$) for
    different boundary conditions (BCs), computed via the SPDE
    approach ($10^5$ realizations). Top rows: homogeneous Dirichlet and Neumann BCs; bottom: averaged GRFs.}
    \label{fig:nonsquare}
\end{figure}

Our experiments indicate that averaging homogeneous Dirichlet and Neumann boundary conditions can yield
isotropic covariance in cases where symmetry or local convexity may play a role. This observation has
potential implications for GRF generation on manifolds, such as spheres, where boundary effects remain an open problem.
Notably, using two boundary conditions reduces computational cost compared to the $2^d$ required for the
averaged GRF.

\printbibliography

\appendix 
\section{Auxiliary Results} \label{appendix} 
    \begin{lemma}   \label{app:lem:maternAssumption}
    Let $\varphi$ be a stationary covariance function from the \Matern family
    with correlation length $\ell > 0$ and smoothness parameter $\nu > 0$, as
    defined in \eqref{eq:maternDefinition}. Then $\varphi$ satisfies
    Assumption~\ref{assump:statCov} with
    \begin{equation}
        A = C_\nu \ell^{\,d} \nu^{-\nicefrac{d}{2}} 2^{2 \nu + \nicefrac{d}{2}}
            \quad \text{and} \quad
            \varepsilon = 2 \nu.
    \end{equation}
\end{lemma}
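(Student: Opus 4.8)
The plan is to leverage the closed-form expression for the Fourier transform of a \Matern covariance supplied by the Hankel transform \eqref{eq:maternHankel}, so that the entire statement reduces to an elementary estimate on an explicitly known, strictly positive function. Recalling \eqref{eq:statCovDef} and the relation $\fourier{\varphi}(y) = \ell^{\,d}\big[\mc{H}_d\rho\big]\big(\ell\,\norm{y}_2\big)$, I would first substitute the \Matern Hankel transform \eqref{eq:maternHankel} to obtain the exact identity
\begin{equation}
    \fourier{\varphi}(y) = \ell^{\,d} C_{\nu} (2\nu)^{\nu}
        \big(2\nu + (2\pi\ell\,\norm{y}_2)^2\big)^{-(\nu + \nicefrac{d}{2})},
        \quad y \in \rpow{d}.
\end{equation}
Two features are immediate: the right-hand side is strictly positive, so $|\fourier{\varphi}(y)| = \fourier{\varphi}(y)$ and no absolute value needs to be tracked; and the exponent satisfies $2(\nu + \nicefrac{d}{2}) = d + 2\nu$, which already pins down the decay rate and hence fixes the choice $\varepsilon = 2\nu$.

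The core of the argument is then to pass from the quadratic $2\nu + (2\pi\ell\,\norm{y}_2)^2$ to the required denominator $(1 + \norm{y}_2)^{d + 2\nu}$. Writing $r = \norm{y}_2$ and using $d + 2\nu = 2(\nu + \nicefrac{d}{2})$, it suffices to find a constant $c > 0$ with $2\nu + (2\pi\ell r)^2 \geq c\,(1 + r)^2$ for all $r \geq 0$: raising this to the power $-(\nu + \nicefrac{d}{2})$ and multiplying by the prefactor $\ell^{\,d} C_\nu (2\nu)^\nu$ produces a bound of exactly the form $A\,(1 + \norm{y}_2)^{-(d+2\nu)}$, with $A = \ell^{\,d} C_\nu (2\nu)^\nu c^{-(\nu + \nicefrac{d}{2})}$. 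Such a $c$ exists because the ratio $(1+r)^2 / \big(2\nu + (2\pi\ell r)^2\big)$ is continuous on $[0,\infty)$ with finite limits both at $r = 0$ and as $r \to \infty$, hence bounded; the remaining work is to pick a clean elementary lower bound for $c$ (for instance by splitting into the regimes governed by the constant term $2\nu$ and by the quadratic term, using $1 + r^2 \geq \tfrac12(1+r)^2$) and to simplify the resulting prefactor into the advertised closed form for $A$.

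Finally, I would record the membership $\varphi \in \lp{1}{\rpow{d}} \cap \lp{2}{\rpow{d}}$ demanded by Assumption~\ref{assump:statCov}. Square-integrability follows from the decay bound just established together with Plancherel's theorem: since $2(d+2\nu) > d$, the function $(1+\norm{y}_2)^{-(d+2\nu)}$ lies in $\lp{2}{\rpow{d}}$, so $\fourier{\varphi} \in \lp{2}{\rpow{d}}$ and therefore $\varphi \in \lp{2}{\rpow{d}}$. Integrability is cleanest to read off directly from \eqref{eq:maternDefinition}: the factor $K_{\nu}(\sqrt{2\nu}\,s)$ decays exponentially as $s \to \infty$, so $\rho$ — and hence $\varphi(\delta) = \rho(\ell^{-1}\norm{\delta}_2)$ — decays exponentially; passing to polar coordinates then shows $\int_{\rpow{d}} |\varphi| < \infty$, the radial integrand $s^{d-1}\rho(s)$ being bounded near the origin and exponentially small at infinity.

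I expect the main obstacle to be the constant bookkeeping in the middle step rather than any conceptual difficulty: positivity and the decay rate $\varepsilon = 2\nu$ fall out of the explicit spectral density at once, but extracting the precise closed form for $A$ requires choosing the comparison constant $c$ so that the algebra collapses neatly. The subtle point is that $c$ necessarily reflects the relative sizes of the constant term $2\nu$ and the quadratic term $(2\pi\ell r)^2$, so its dependence on $\ell$ and $\nu$ must be carried through carefully when rendering $A$ in the stated form.
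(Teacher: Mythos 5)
Your strategy is essentially the paper's: insert the explicit \Matern spectral density from \eqref{eq:maternHankel}, note positivity, read off $\varepsilon = 2\nu$ from the exponent $2(\nu + \nicefrac{d}{2}) = d + 2\nu$, and then compare $2\nu + (2\pi\ell r)^2$ against $(1+r)^2$ by an elementary argument (the paper does this via a case split at $\norm{\delta}_2 = 1$, you via a single comparison constant $c$); your treatment of the $L^1 \cap L^2$ membership is also fine. The genuine gap is the step you explicitly defer: choosing $c$ so that the prefactor \enquote{collapses} to the advertised $A$. This cannot be done in general. In your notation, the advertised constant corresponds exactly to $c = \nicefrac{\nu}{2}$, since $\ell^{\,d} C_\nu (2\nu)^\nu (\nicefrac{\nu}{2})^{-(\nu + \nicefrac{d}{2})} = C_\nu \ell^{\,d} \nu^{-\nicefrac{d}{2}} 2^{2\nu + \nicefrac{d}{2}}$. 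But any admissible $c$ must satisfy $c \le \inf_{r \ge 0} \big(2\nu + (2\pi\ell r)^2\big)/(1+r)^2 \le \lim_{r \to \infty} \big(2\nu + (2\pi\ell r)^2\big)/(1+r)^2 = (2\pi\ell)^2$, so $c = \nicefrac{\nu}{2}$ is admissible only if $\nu \le 8\pi^2 \ell^{\,2}$. When $\nu > 8\pi^2\ell^{\,2}$ (e.g.\ $\nu = 8$, $\ell = 0.1$, a parameter pair appearing in Table~\ref{tab:covErrors}), the claimed inequality with the advertised $A$ genuinely fails for large $\norm{y}_2$: comparing leading terms as $\norm{y}_2 \to \infty$ forces $A \ge \ell^{\,d} C_\nu (2\nu)^{\nu} (2\pi\ell)^{-(d + 2\nu)} = C_\nu \ell^{\,d} (2\nu)^{-\nicefrac{d}{2}} \big(\kappa/(2\pi)\big)^{d + 2\nu}$ with $\kappa = \sqrt{2\nu}\,\ell^{\,-1}$, and this exceeds the advertised value precisely when $\kappa > 4\pi$.

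For what it is worth, you are in good company: the paper's own proof stumbles at the same point. In its case $\norm{\delta}_2 > 1$, the middle inequality of the chain bounds $\big(\norm{\delta}_2^{-2} + 4\pi^2\kappa^{-2}\big)^{-2\beta}$ by $\big(1 + 4\pi^2\kappa^{-2}\big)^{-2\beta}$, which requires $\norm{\delta}_2^{-2} \ge 1$ --- i.e.\ the inequality runs the wrong way exactly in the case under consideration. What both your argument and the paper's do establish is the statement actually needed downstream: Assumption~\ref{assump:statCov} holds for \Matern covariances with $\varepsilon = 2\nu$ and \emph{some} finite $A$. Indeed, your scheme closes cleanly with $c = \min\{\nu,\, 2\pi^2\ell^{\,2}\}$ (using $(1+r)^2 \le 2(1+r^2)$), yielding the valid constant $A = \ell^{\,d} C_\nu (2\nu)^{\nu} \min\{\nu,\, 2\pi^2 \ell^{\,2}\}^{-(\nu + \nicefrac{d}{2})}$ for all parameter values --- just not the constant printed in the lemma. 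Your closing caveat, that $c$ must reflect the relative sizes of $2\nu$ and $(2\pi\ell)^2$, is precisely the point: the lemma's printed constant lacks that dependence, and no choice of $c$ can recover it.
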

\begin{proof}
    The integrability assumption in~\ref{assump:statCov} is satisfied as
    per the boundedness and exponential decay of $\varphi$, proven e.g.~in
    \cite{bachmayr2020unified},~Lemma~2.
    In proving the decay property, we note that
    according to \eqref{eq:maternHankel} for any $\delta \in \R^d$,
    the Fourier transform $\fourier{\varphi}$ of the stationary \Matern covariance
    function can be rewritten as
    \begin{equation}
        \fourier{\varphi}(\delta) =
            C \left(1  + 4 \pi^2 \kappa^{-2} \norm{\delta}_2^2\right)^{-2 \beta},
                \quad \text{where} \quad C \coloneqq \kappa^{-d} C_{\nu},
    \end{equation}
    and $\kappa, \beta > 0$ are as defined in Section~\ref{sec:prel:spdeSampling}.
    We claim that $h(\delta) \leq A$ for all $\delta \in \R^d$, where we define $h$ as the ratio\vspace{-2ex}
    \begin{equation}
        \qquad h(\delta) \coloneqq \frac{\fourier{\varphi}(\delta)}{(1 + \norm{\delta}_2)^{-(d + 2 \nu)}}.
    \end{equation}
    For $\delta = 0$, we have $h(0) = C \leq A$. Next, in the case that
    $0 < \norm{\delta}_2 \leq 1$, we have
    $h(\delta) \leq C 2^{d + 2 \nu} = A$. Finally, if $\norm{\delta}_2 > 1$,
    then
    \begin{equation}
        h(\delta) \leq \frac{C 2^{d + 2 \nu}}
            {(\norm{\delta}_2^{-2} + 4 \pi^2 \kappa^{-2})^{2 \beta}}
            \leq \frac{C 2^{d + 2 \nu}}{(1 + 4 \pi^2 \kappa^{-2})^{2 \beta}}
            \leq C 2^{d + 2 \nu} = A,
    \end{equation}
    which concludes the proof.
\end{proof}

\begin{lemma}   \label{app:lem:auxSum}
Let $\beta, \gamma \in \R$ with $0 \leq \beta < \gamma$ and
$0 < \vartheta \coloneqq \Gamma(2+\beta)^{-\nicefrac{1}{(1+\beta)}} < 1$. Then
\begin{equation}
        \sumlim{m = 1}{\infty} m^{\beta} e^{-\gamma m}
            \leq e^{-\vartheta \gamma} \Big(1 + \Gamma(2 + \beta)~\gamma^{-(1+\beta)}\Big).
\end{equation}
\end{lemma}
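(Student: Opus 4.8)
The plan is to bound the series by its first term plus an integral, and then to isolate the factor $e^{-\vartheta\gamma}$ from a reparametrised incomplete-Gamma integral. First I would note that, because $\beta<\gamma$, the function $f(x)=x^{\beta}e^{-\gamma x}$ has its maximum at $x=\beta/\gamma<1$ and is therefore decreasing on $[1,\infty)$. The standard integral test then yields
\[
  \sumlim{m=1}{\infty} m^{\beta} e^{-\gamma m}
    \le f(1) + \int_{1}^{\infty} x^{\beta} e^{-\gamma x}\,\mrm{d}x
    = e^{-\gamma} + \int_{1}^{\infty} x^{\beta} e^{-\gamma x}\,\mrm{d}x .
\]
The first term is handled by $e^{-\gamma}\le e^{-\vartheta\gamma}$, using only $\vartheta\le1$, and this supplies the summand $1$ inside the bracket. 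Substituting $u=\gamma x$ turns the integral into $\gamma^{-(1+\beta)}\int_{\gamma}^{\infty} u^{\beta} e^{-u}\,\mrm{d}u$, so, recalling $\Gamma(2+\beta)=\vartheta^{-(1+\beta)}$ (hence $\Gamma(2+\beta)\gamma^{-(1+\beta)}=(\vartheta\gamma)^{-(1+\beta)}$), the whole statement reduces to the single key estimate
\[
  \int_{\gamma}^{\infty} u^{\beta} e^{-u}\,\mrm{d}u \le \Gamma(2+\beta)\, e^{-\vartheta\gamma}.
\]

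To establish this key estimate I would split the integrand as $u^{\beta}e^{-u}=\big(u^{\beta}e^{-(1-\vartheta)u}\big)\,e^{-\vartheta u}$, bound the first factor by its global maximum $M=\max_{u\ge0}u^{\beta}e^{-(1-\vartheta)u}=\big(\tfrac{\beta}{(1-\vartheta)e}\big)^{\beta}$ (here $\vartheta<1$ is needed), and integrate the remaining exponential over $[\gamma,\infty)$ to obtain $\int_{\gamma}^{\infty}u^{\beta}e^{-u}\,\mrm{d}u\le M\,\vartheta^{-1}e^{-\vartheta\gamma}$. Comparing with the target $\Gamma(2+\beta)e^{-\vartheta\gamma}=\vartheta^{-(1+\beta)}e^{-\vartheta\gamma}$, the estimate holds as soon as $M\,\vartheta^{-1}\le\vartheta^{-(1+\beta)}$, which after taking $\beta$-th roots collapses to the clean inequality $\vartheta(e+\beta)\le e$, i.e.\ $\vartheta\le e/(e+\beta)$.

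The main obstacle is therefore this last inequality, which, after substituting the definition of $\vartheta$, unwinds to the purely analytic claim
\[
  \Gamma(2+\beta)\ge\Big(1+\tfrac{\beta}{e}\Big)^{1+\beta}, \quad \beta\ge0 .
\]
I expect this to be the genuinely delicate point, since the crude Stirling bound $\Gamma(1+x)\ge(x/e)^{x}$ is \emph{too weak} here: it would give the wrong base $(1+\beta)/e<1+\beta/e$. Two routes look viable. One is to use a sharper Stirling-type lower bound $\Gamma(1+x)\ge\sqrt{2\pi x}\,(x/e)^{x}$ (valid for $x=1+\beta\ge1$) together with the elementary $\big(1+\tfrac{e-1}{n}\big)^{n}\le e^{\,e-1}$, which disposes of all large $\beta$, leaving a bounded range of small $\beta$ to be closed by direct verification. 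The cleaner alternative is to set $G(\beta)\coloneqq\log\Gamma(2+\beta)-(1+\beta)\log(1+\beta/e)$, note $G(0)=0$, and show $G'\ge0$ on $[0,\infty)$; using the digamma lower bound $\psi(2+\beta)\ge\log(\beta+\tfrac32)$ this reduces $G'\ge0$ to the special-function-free inequality $\log\tfrac{e(\beta+\frac32)}{\beta+e}\ge\tfrac{\beta+1}{\beta+e}$, which holds for all $\beta\ge0$. Either way, the preceding reductions are routine and only this final Gamma-function inequality requires care.
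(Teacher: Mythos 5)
Your proposal is correct, and up to the incomplete-Gamma integral it follows the paper verbatim: both arguments use monotonicity of $x^{\beta}e^{-\gamma x}$ on $[1,\infty)$ (via $\beta<\gamma$), the integral test, the bound $e^{-\gamma}\le e^{-\vartheta\gamma}$ for the first term, and the substitution $u=\gamma x$, reducing everything to showing $\Gamma(1+\beta,\gamma)=\int_{\gamma}^{\infty}u^{\beta}e^{-u}\,\mathrm{d}u\le\Gamma(2+\beta)\,e^{-\vartheta\gamma}$. At that point the routes genuinely diverge. The paper changes variables $t\mapsto t^{p}$ with $p=(1+\beta)^{-1}$ and invokes Alzer's inequality for $\int e^{-t^{p}}\mathrm{d}t$ (Theorem~1 of \cite{alzer1997some}) together with Bernoulli's inequality $(1-x)^{y}\ge 1-xy$; the constant $\vartheta=\Gamma(2+\beta)^{-\nicefrac{1}{(1+\beta)}}$ is exactly what Alzer's theorem permits, so no further analysis is needed. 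You instead split $u^{\beta}e^{-u}=\big(u^{\beta}e^{-(1-\vartheta)u}\big)e^{-\vartheta u}$, bound the first factor by its maximum $M=\big(\beta/((1-\vartheta)e)\big)^{\beta}$, and thereby reduce the lemma to the Gamma-function inequality $\Gamma(2+\beta)\ge(1+\beta/e)^{1+\beta}$. What this buys is a self-contained, elementary proof that does not lean on the incomplete-gamma literature, at the price of having to establish that inequality, which you correctly single out as the crux. Your second route for it does close, and can be finished cleanly: $\psi(2+\beta)\ge\log(\beta+\tfrac{3}{2})$ is the standard bound $e^{\psi(x+1)}>x+\tfrac{1}{2}$, and the resulting inequality $\log\tfrac{e(\beta+3/2)}{\beta+e}\ge\tfrac{\beta+1}{\beta+e}$ becomes, after substituting $s=(\beta+e)^{-1}\in(0,e^{-1}]$, the claim
\begin{equation}
    h(s)\coloneqq 1-\big(e-\tfrac{3}{2}\big)s-e^{-(e-1)s}\ \ge\ 0 ,
\end{equation}
which holds because $h$ is concave, $h(0)=0$ and $h(e^{-1})=\tfrac{3}{2e}-e^{-1+\nicefrac{1}{e}}\approx 0.02>0$, so $h\ge\min\{h(0),h(e^{-1})\}\ge 0$ on the whole interval. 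Two caveats: your first route (sharp Stirling plus \enquote{direct verification} on a bounded range of $\beta$) is not a proof as stated, since that range is a continuum, so route two should be the one you carry out; and the case $\beta=0$ should be noted separately (as the paper also does), since there the root-taking step is vacuous -- though then $M=1$ and the required inequality $M\vartheta^{-1}\le\vartheta^{-1}$ is trivial.
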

\begin{proof}
    The sole maximum of the positive and continuous function
    $g: [0, \infty) \to \R$, with $x \mapsto x^{\beta} e^{-\gamma x}$,
    is attained at $x_{\max} = \nicefrac{\beta}{\gamma}$. The assumption
    $\beta < \gamma$ then assures that $x_{\max} < 1$ and consequently, $g$
    is monotonously decreasing on $[1, \infty)$. In this case, a simple 
    integral test estimate gives
    \begin{equation}    \label{eq:intBound0}
        \sumlim{m=1}{\infty} m^{\beta} e^{-\gamma m}
            = e^{-\gamma} + \sumlim{m=2}{\infty} m^{\beta} e^{-\gamma m}
            \leq e^{-\gamma} + \int\limits_{1}^{\infty} t^{\beta} e^{-\gamma t}~\mrm{d}t.
    \end{equation}
    Continuing with the second term, we see that
    \begin{equation}    \label{eq:intGamma}
        \int\limits_{1}^{\infty}t^{\beta} e^{-\gamma t}~\mrm{d}t
            = \gamma^{-(1+\beta)} \int\limits_{\gamma}^{\infty}t^{\beta} e^{-t}~\mrm{d}t
            \equiv \gamma^{-(1+\beta)}~\Gamma (1 + \beta, \gamma),
    \end{equation}
    where $\Gamma$ denotes the upper incomplete Gamma function. The case $\beta = 0$ is 
    straight-forward, so we assume $\beta > 0$ in the following. Now, making the change of 
    variables $t \mapsto t^p$, where $p \coloneqq (1+\beta)^{-1}$, we see that
    $\Gamma(1+\beta, \gamma) = p \int_{\gamma^{\nicefrac{1}{p}}}^{\infty}e^{-t^p} \mrm{d}t$,
    and therefore Theorem~1 and its Corollary from \cite{alzer1997some} imply that in the present case,
    \begin{equation}    \label{eq:intermediateGamma}
        \Gamma(1+\beta, \gamma)
            < p \, \Gamma\big(1 + \nicefrac{1}{p}\big) 
                \left[1 - \Big(1 - e^{-\vartheta \gamma} \Big)^{\nicefrac{1}{p}}\right],
    \end{equation}
    for any $0 \leq \vartheta \leq \min\big\{1, \Gamma\big(1 + \nicefrac{1}{p}\big)^{-p}\big\}$.
    Inserting \eqref{eq:intermediateGamma} into \eqref{eq:intGamma}, as well as using the fact that
    $(1 - x)^{y} \geq 1 - xy$ for $x \in [0, 1]$ and $y \geq 1$, yields
    \begin{equation}    \label{eq:intBound1}
        \int\limits_{1}^{\infty}t^{\beta} e^{-\gamma t}~\mrm{d}t \leq \gamma^{-(1 + \beta)}~\Gamma(2 + \beta) e^{-\vartheta \gamma}.     
    \end{equation}
    We note that for $\beta \geq 0$, the largest value for $\vartheta$ (i.e.~quickest decay) we can choose is
    $\vartheta = \Gamma(2 + \beta)^{-\nicefrac{1}{(1+\beta)}}$. As $\vartheta \leq 1$, we have
    $e^{-\gamma} \leq e^{-\vartheta \gamma}$ and applying this bound, as well as the bound \eqref{eq:intBound1} 
    to \eqref{eq:intBound0} yields the desired statement.
\end{proof}

\end{document}